\documentclass[a4paper,UKenglish,cleveref, autoref, thm-restate]{lipics-v2021}
\nolinenumbers
\usepackage[T1]{fontenc}
\usepackage{libertine}
\usepackage{tabularx}


\pdfoutput=1 
\hideLIPIcs  


\bibliographystyle{plainurl}
\usepackage{algorithm2e}
\usepackage{tipa}
\usepackage{tikz}
\usepackage{apxproof}

 \usepackage[utf8]{inputenc}

\graphicspath{{fig/}}
\newcommand{\hide}[1]{}
\usepackage[colorinlistoftodos,prependcaption,textsize=tiny]{todonotes}

\theoremstyle{definition}

\DeclareMathOperator*{\argmax}{\textsc{}{argmax}}
\DeclareMathOperator*{\argmin}{\textsc{}{argmin}}
\newcommand{\Active}{\textsc{Active}}
\newcommand{\Above}{\textsc{Above}}
\newcommand{\Below}{\textsc{Below}}
\newcommand{\Contain}{\textsc{Contain}}
\newcommand{\Slab}{\textsc{Slab}}
\newcommand{\Subslab}{\textsc{Subslab}}
\newcommand{\Piece}{\textsc{Piece}}
\newcommand{\ub}{\textsc{Ub}}
\newcommand{\lb}{\textsc{Lb}}
\newcommand{\upb}{\textsc{Upb}}
\newcommand{\lpb}{\textsc{Lpb}}

\newcommand{\Strip}{\textsc{Strip}}

\newcommand{\OPT}{\textsc{opt}}

\newcommand{\corridor}{\textsc{Corridor}}

\def\cT{\mathcal{T}}
  \newcommand\upleft[1][1]{%
  \begin{tikzpicture}[scale=#1]
  \draw[line width=0.25mm] (0,0) -- (0,1) -- (-2,1);
  \end{tikzpicture}
  }
    \newcommand\leftup[1][1]{%
  \begin{tikzpicture}[scale=#1]
  \draw[line width=0.25mm] (0,0) -- (0,-1) -- (-2,-1);
  \end{tikzpicture}
  }

\title{A fast algorithm for computing a planar support for non-piercing rectangles}
\author{Ambar Pal}{Johns Hopkins University}{ambar@jhu.edu}{https://orcid.org/0000-0001-6775-4980}{}
\author{Rajiv Raman}{Indraprastha Institute of Information Technology Delhi}{rajiv@iiitd.ac.in}{https://orcid.org/0009-0000-8013-9421}{}
\author{Saurabh Ray}{NYU Abu Dhabi}{saurabh.ray@nyu.edu}{}{}
\author{Karamjeet Singh}{Indraprastha Institute of Information Technology Delhi}{karamjeets@iiitd.ac.in}{}{}
\authorrunning{A. Pal and R. Raman and S. Ray and K. Singh}
\Copyright{Ambar Pal and Rajiv Raman and Saurabh Ray and Karamjeet Singh} 

\ccsdesc[500]{Theory of Computation~Computational Geometry} 

\keywords{Algorithms, Hypergraphs, Computational Geometry, Visualization}





\category{} 

\relatedversion{} 

\begin{document}
\maketitle

\begin{abstract}
For a hypergraph $\mathcal{H}=(X,\mathcal{E})$
a \emph{support} is a graph $G$ on $X$ such that for each $E\in\mathcal{E}$, the induced
subgraph of $G$ on the elements in $E$ is connected. If $G$ is planar, we call it
a planar support. A set of axis parallel rectangles $\mathcal{R}$ forms a non-piercing family if for any $R_1, R_2 \in \mathcal{R}$, $R_1 \setminus R_2$ is connected.

Given a set $P$ of $n$ points in $\mathbb{R}^2$ and a set $\mathcal{R}$ of $m$ \emph{non-piercing} axis-aligned rectangles, we give an algorithm for computing a planar support
for the hypergraph $(P,\mathcal{R})$ in $O(n\log^2 n + (n+m)\log m)$ time, where each $R\in\mathcal{R}$ defines
a hyperedge consisting of all points of $P$ contained in~$R$.

 We use this result to show that if for a family of axis-parallel rectangles, any point in the plane is contained in at most $k$ pairwise \emph{crossing} rectangles (a pair of intersecting rectangles such that neither contains
 a corner of the other is called a crossing pair of rectangles), 
 then we can obtain a support as the union of $k$ planar graphs. 
\end{abstract}

\section{Introduction}
Given a hypergraph $\mathcal{H}=(V,\mathcal{E})$, 
a \emph{support} is a graph $G$ on $V$ such that for all $E\in\mathcal{E}$, the subgraph induced by $E$ in $G$, denoted $G[E]$ is connected.
The notion of a support was introduced by Voloshina and Feinberg \cite{voloshina1984planarity} in the context of VLSI circuits.
Since then, this notion has found wide applicability in several areas, such as visualizing hypergraphs 
\cite{bereg2015colored,bereg2011red,brandes2010blocks,brandes2012path,buchin2011planar,havet2022overlaying,hurtado2018colored},
in the design of networks \cite{anceaume2006semantic,baldoni2007tera,baldoni2007efficient,chand2005semantic,hosoda2012approximability,korach2003clustering,onus2011minimum}, and 
similar notions have been used in the analysis of local search algorithms for geometric problems \cite{Mustafa17,BasuRoy2018,Cohen-AddadM15,krohn2014guarding,mustafa2010improved,RR18}.

Any hypergraph clearly has a support: the complete graph on all vertices.
In most applications however, we require a support with an additional structure. For example, we may want a support
with the fewest number of edges, or a support that comes from a restricted family of graphs (e.g., outerplanar graphs).

Indeed, the problem of constructing a support has been studied by several research communities. 
For example, Du, et al., \cite{du1986optimization,du1995complexity,du1988matroids} studied the problem of minimizing the number of edges in a support, motivated
by questions in the design of vacuum systems. The problem has also been studied under the topic of ``minimum overlay networks''
\cite{hosoda2012approximability,chockler2007constructing} with applications to distributed computing. 
Johnson and Pollack \cite{johnson1987hypergraph} showed that it is NP-hard to decide if a hypergraph
admits a planar support. 

In another line of work, motivated by the analysis of approximation algorithms for packing and covering problems on \emph{geometric
hypergraphs}\footnote{In a geometric hypergraph, the elements of the hypergraph are points in the plane, 
and the hyperedges are defined by geometric regions in the plane, where each region defines a hyperedge consisting of 
all points contained in the region.}, several authors have considered the problem of constructing supports that belong to a family having sublinear sized separators\footnote{A family of graphs $\mathcal{G}$ admits sublinear sized
separators if there exist $0<\alpha,\beta<1$ s.t. for any $G\in \mathcal{G}$,
there exists a set $S\subseteq V(G)$ s.t. $G[V\setminus S]$ consists of two parts
$A$ and $B$ with $|A|,|B|\le \alpha |V|$, and 
there is no path in $G[V\setminus S]$ between a vertex in $A$ to a vertex in $B$. 
Further, $|S|\le |V|^{1-\beta}$.} such as planar graphs, or graphs
of bounded genus \cite{RR18,DBLP:journals/dcg/RamanR22,raman2024hypergraph}.
For this class of problems, the problem of interest is only to show the existence of a support from a restricted family
of graphs.

{\bf Our contribution.} 
So far there are very few tools or 
techniques to construct a support for a given hypergraph or even to show that a support with desired properties (e.g., planarity) exists.
Our paper presents a fast algorithm to construct a planar support for a restricted setting,
namely hypergraphs defined by axis-parallel rectangles that are \emph{non-piercing}, i.e., for each pair of intersecting
rectangles, one of them contains a corner of the other. This may seem rather restrictive. However, even if we allow each rectangle
to belong to at most one piercing pair of rectangles, it is not difficult to construct examples where for any $r\ge 3$,
any support must have $K_{r,r}$ as a topological minor. To see this, consider a geometric drawing of $K_{r,r}$ in the usual manner,
i.e., the two partite sets on two vertical lines, and the edges as straight-line segments. Replace each edge of the graph
by a long path, and then replace each edge along each path by a small rectangle that contains exactly two points. 
Where the edges cross, a pair of rectangles corresponding to each edge cross.
Since each rectangle contains two points, it leaves us no choice as to the edges we can add. It is
easy to see that the resulting support contains $K_{r,r}$ as a topological minor. Further, even for this restricted problem,
the analysis of our algorithm is highly non-trivial, and we hope that the tools introduced in this paper will 
be of wider interest.

Raman and Ray \cite{RR18},
showed that the hypergraph defined by \emph{non-piercing regions}\,\footnote{A family of simply connected regions $\mathcal{R}$, each of whose boundary is defined by a simple Jordan curve is called
non-piercing if for every pair of regions $A,B\in\mathcal{R}$, $A\setminus B$ and $B\setminus A$ are connected. The result of \cite{RR18} was
for more general families.} in the plane admits a planar support. Their proof implies an $O(m^2(\min\{m^3,mn\}+n))$ time 
algorithm to compute a planar support where $m$ is the number of regions and $n$ is the number of points in the arrangement of the regions. 
While their algorithm produces a plane embedding, the edges may in general be arbitrarily complicated curves i.e., they may have an arbitrary number of bends. It can be shown that if the non-piercing regions are convex then there exists an embedding of the planar support with straight edges but it is not clear how to find such an embedding efficiently.

We present a simple and fast algorithm for drawing plane supports with straight-line edges for non-piercing rectangles. More precisely, the following is the
problem definition:

\smallskip
\noindent
{\bf Support for non-piercing rectangles:}\\
{\bf Input:} A set of $m$ axis-parallel non-piercing rectangles $\mathcal{R}$ and a set $P$ of $n$ points in $\mathbb{R}^2$. \\
{\bf Output:} A plane graph $G$ on $P$ s.t. for each $R\in\mathcal{R}$,
$G[R\cap P]$, namely the induced subgraph on the points in $R\cap P$, is connected.

Our algorithm runs in $O(n \log^2 n + (n + m) \log m)$ time, and can be easily implemented using existing data structures. The embedding computed by our algorithm not only has straight-line edges but also 
for each edge $e$, the axis-parallel rectangle with $e$ as the diagonal does not contain any other point of $P$ --
this makes the visualization cleaner.

In order to develop a faster algorithm, we need to find a new construction (different from \cite{RR18}), and the proof of correctness for this construction is not so straightforward. We use a sweep line algorithm. However, at any point in time, it is not possible to have the invariant that the current graph is a support for the portions of the rectangles that lie to the left of the sweep line. Instead, we show that certain {\em slabs} within each rectangle induce connected components of the graph and only after we sweep over a rectangle completely do we finally have the property that the set of points in that rectangle induce a connected subgraph. 

{\bf Organization.} The rest of the paper is organized as follows. We start in Section~\ref{sec:related} with related work.
In Section~\ref{sec:prelim}, we present preliminary notions required for our algorithm.
In Section~\ref{sec:algorithm}, we present a fast algorithm to construct a planar support.
We show in Section~\ref{sec:correctness} that
the algorithm is correct, i.e., it does compute a planar support. We present the implementation details in Section~\ref{sec:implementation}. 
We give some applications of our results in Section~\ref{sec:Application}, and conclude in
Section \ref{sec:conclusion}.

\section{Related work}
\label{sec:related}

The notion of the existence of a support, and in particular a planar support arose in the field of
VLSI design \cite{voloshina1984planarity}. A VLSI circuit is viewed as a hypergraph
where each individual electric component corresponds uniquely to a vertex of the hypergraph,
and sets of components called \emph{nets} correspond uniquely to a hyperedge.
The problem is to connect the components with wires so that
for every net, there is a tree spanning its components. Note that planarity in this context is natural as 
we don't want wires to cross. 

Thus, a motivation to study supports was to define a notion of planarity suitable for hypergraphs. 
Unlike for graphs, there are different notions of planarity of hypergraphs, not all equivalent to each other. 
Zykov \cite{zykov1974hypergraphs} defined a notion of planarity that was more restricted. A hypergraph is said to be
Zykov-planar if its incidence bipartite graph is planar \cite{zykov1974hypergraphs,walsh1975hypermaps}. 

Johnson and Pollack \cite{johnson1987hypergraph} showed that deciding if a hypergraph admits a planar support is NP-hard. 
The NP-hardness result was sharpened by Buchin, et al., \cite{buchin2011planar} who showed that deciding if a hypergraph
admits a support that is a $k$-outerplanar graph, for $k\ge 2$ is NP-hard, and showed that we can
decide in polynomial time if a hypergraph admits a support that is a tree of bounded degree. 
Brandes, et al., \cite{brandes2010blocks} showed that we can decide in polynomial time if a hypergraph admits a support
that is a cactus\footnote{A cactus is a graph where each edge of the graph lies in at most one cycle.}.

Brandes et al., \cite{brandes2012path}, motivated by the drawing of metro maps, considered the problem of constructing \emph{path-based
supports}, which must satisfy an additional property that the induced subgraph on each hyperedge contains a Hamiltonian path on the vertices of the hyperedge.

Another line of work, motivated by the analysis of approximation algorithms for packing and covering problems
on geometric hypergraphs started with the work of Chan and Har-Peled \cite{ChanH12}, and Mustafa and Ray \cite{mustafa2010improved}.
The authors showed, respectively, that for the Maximum Packing\footnote{In a Maximum Packing problem,
the goal is to select the largest subset of pairwise disjoint hyperedges of a hypergraph.}
of pseudodisks\footnote{A set of simple Jordan curves is a set of pseudocircles if the curves pairwise intersect twice or zero times.
The pseudocircles along with
the bounded region defined by the curves is a collection of pseudodisks.}, and for the
Hitting Set\footnote{In the Hitting Set problem, the goal is to select the smallest subset of vertices of a hypergraph so that
each hyperedge contains at least one vertex in the chosen subset.} problem for pseudodisks, a simple \emph{local search} algorithm
yields a PTAS.
These results were extended by Basu Roy, et al., \cite{BasuRoy2018} 
to work for the Set Cover and Dominating Set\footnote{In the Set Cover problem, the input
is a set system $(X,\mathcal{S})$ and the goal is to select the smallest sub-collection $\mathcal{S}'$ that covers the elements in $X$. For a graph,
a subset of vertices $S$ is a dominating set if each vertex in the graph
is either in $S$, or is adjacent to a vertex in $S$.} problems defined by
points and non-piercing regions, 
and by Raman and Ray \cite{RR18}, who gave a general theorem on the existence of a planar support
for any geometric hypergraph defined by two families 
of \emph{non-piercing regions}.
This result generalized and unified the previously mentioned results, and for a
set of $m$ non-piercing regions, and a set of $n$ points
in the plane, it implies that a support graph
can be constructed in 
time $O(m^2(\min\{m^3,mn\}+n))$. It follows that for non-piercing axis-parallel rectangles, a planar
support can be constructed in time $O(m^2(\min\{m^3,mn\}+n))$. However, in the embedding of the support thus constructed, the edges may be drawn as arbitrary
curves.

\section{Preliminaries}
\label{sec:prelim}
Let $\mathcal{R} = \{R_1,\ldots, R_m\}$ denote a set of axis-parallel rectangles 
and let $P=\{p_1,\ldots, p_n\}$ denote a set of points in the plane. 
We assume that the rectangles and points are in \emph{general position}, i.e., the points in $P$
have distinct $x$ and $y$ coordinates, and the boundaries of any two rectangles in $\mathcal{R}$ are defined by distinct $x$-coordinates and 
distinct $y$-coordinates. Further, we assume that no point in $P$ lies on
the boundary of a rectangle in $\mathcal{R}$.

\textbf{Piercing, Discrete Piercing. }
A rectangle $R'$ is said to \emph{pierce} a rectangle $R$ if $R\setminus R'$
consists of two connected components.
A collection $\mathcal{R}$ of rectangles is \emph{non-piercing} if
no pair of rectangles pierce.
A rectangle $R'$ \emph{discretely pierces} a rectangle $R$ if 
$R'$ pierces $R$ and each component of $R\setminus R'$ contains a point of $P$.
Since we are primarily concerned with discrete piercing, the phrase ``$R$ pierces $R'$''
will henceforth mean discrete piercing, unless stated otherwise.
Note that while piercing is a symmetric relation, discrete piercing is not.

\textbf{`L'-shaped edge. }
We construct a drawing of a support graph $G$ on $P$
using `L'-shaped edges of type: 
$\upleft[0.2]$ or $\leftup[0.2]$. Henceforth, the term \emph{edge} will mean one of the two
`L'-shaped edges joining two points. The embedded graph may not be planar due to the overlap of the edges along their horizontal/vertices segments. However, as we show, $G$ satisfies the additional property that for each edge, the axis-parallel rectangle defined by the
edge has no points of $P$ in its interior (formal definition below), and that no pair of edges cross. Consequently,
replacing each edge with the straight segment joining its end-points
yields a plane embedding of $G$.

\textbf{Delaunay edge, Valid edge, $\mathbf{R(\cdot), h(\cdot), v(\cdot)}$. }
For an edge between 
points $p,q\in P$, let $R(pq)$ denote the rectangle
with diagonally opposite corners $p$ and $q$. 
The edge $pq$ is a
\emph{Delaunay edge} if the interior of $R(pq)$ does not contain a point of $P$.
We say that an edge $pq$ (discretely) pierces a rectangle $R$ if $R\setminus\{pq\}$ consists of two regions, and each region contains a point
of $P$.
An edge $pq$ is said to be \emph{valid} if it does not discretely pierce any rectangle $R \in \mathcal{R}$, and does not cross any existing
edge. 

For an edge $pq$, we use $h(pq)$ for the horizontal segment of $pq$, and $v(pq)$ for the vertical segment of $pq$. 

\textbf{Monotone Path, Point above Path. }
A path $\pi$ is said to be $x$-monotone if a vertical line, i.e., a line 
parallel to the $y$-axis,
does not intersect the path in more than one point. We modify this
definition slightly for our purposes - we say that a path consisting of a sequence of
$\upleft[0.2]$, or $\leftup[0.2]$ edges is $x$-monotone if any vertical line intersects the path in at most one vertical segment (which may in some cases be a single point). 
Let $\pi$ be a path and
$q$ be a point not on the path. We say that ``$q$ lies above $\pi$'' if $\ell_q$,
the vertical line through $q$ intersects $\pi$ at point(s) below $q$. We define the notion that ``$q$ lies below $\pi$'' analogously. Note that these notions are defined only if $\ell_q$ intersects $\pi$.

\textbf{Left(Right)-Neighbor, Left(Right)-Adjacent.}
For a point $q\in P$ and a set $P'\subseteq P$, the \emph{right-neighbor}
of $q$ in $P'$ is $q_1$, where $q_1=\argmin_{q'\in P'} \{ x(q'): x(q')>x(q)\}$.
The \emph{left-neighbor} of $q$ in $P'$ is defined similarly, i.e.,
$q_0$ is the left-neighbor of $q$, where $q_0=\argmax_{q'\in P'}\{x(q'): x(q')<x(q)\}$.
 Note that being a left- or right-neighbor
is a \emph{geometric notion}, and not related to the support graph we construct.
We use the term \emph{left-adjacent} to refer to the neighbors of
$q$ in a plane graph $G$ that lie to the left of $q$. The term \emph{right-adjacent} is defined analogously.

\section{Algorithm}
\label{sec:algorithm}
In this section, we present an algorithm to compute a planar support for the hypergraph defined by points and non-piercing axis-parallel rectangles in $\mathbb{R}^2$:
Perform a left-to-right vertical line sweep and at each input point encountered, add all possible \emph{valid Delaunay edges} to previous points.
The algorithm, presented as Algorithm~\ref{alg:prim}, 
draws edges having shapes in \{\upleft[0.2], \leftup[0.2]\}. 
We prove correctness of Algorithm~\ref{alg:prim} in Section~\ref{sec:correctness}, and show how it can be implemented to run in $O(n\log^2 n + (n + m) \log m))$ time in
Section~\ref{sec:implementation}.

\begin{algorithm}
\SetAlgoLined
\KwIn{ A set $P$ of points, and a set $\mathcal{R}$ of non-piercing axis-parallel rectangles in $\mathbb{R}^2$.}
\KwOut{Embedded Planar Support $G = (P, E)$}
 Order $P$ in increasing order of \textit{x-coordinates}: $(p_1,\ldots, p_n)$ \\
 $E = \emptyset$ \\
 \For{{\rm each point} $p_i$ {\rm in sorted order}, $i \in \{2, 3, \ldots, n\}$}{
    $E = E \cup \{e_{ij}=p_ip_j \,|\, j<i,\mbox{ and } e_{ij}$ \mbox{ is a} \emph{valid Delaunay edge}.\} \\
 }
\caption{The algorithm outputs a graph $G$ on $P$ embedded in $\mathbb{R}^2$, whose edges are valid Delaunay edges of type $\{\upleft[0.2],\leftup[0.2]\}$. Replacing each Delaunay edge $\{p,q\}$ by the diagonal of $R(pq)$ yields a plane embedding of $G$.}
\label{alg:prim}
\end{algorithm}

\subsection{Correctness}
\label{sec:correctness}
In this section, we show that the graph $G$ constructed on $P$ by Algorithm~\ref{alg:prim} 
is a support graph for the rectangles in $\mathcal{R}$, and this is sufficient as planarity follows directly by construction.
The proof is technical, and we start with some necessary notation.

For a rectangle $R$, we denote the $y$-coordinates of the lower and upper horizontal sides by $y_-(R)$ and $y_+(R)$, respectively. Similarly, 
$x_-(R)$ and $x_+(R)$ denote respectively the $x$-coordinates of the left and right vertical sides. 
We denote the vertical line 
through any point $p$ by $\ell_p$. 

We use $\textsc{Piece}(R,H)$ to denote the rectangle $R\cap H$ for a halfplane $H$ defined by a vertical line.
We abuse notation and use $\textsc{Piece}(R,p)$ to denote the rectangle $R\cap H_-(\ell_p)$, the intersection of $R$ with the left half-space
defined by the vertical line through the point $p$.

We also use the notation $R[x_-,x_+]$ to denote the sub-rectangle of
rectangle $R$, that lies between $x$-coordinates $x_-$ and $x_+$. Similarly, 
we use $R[y_-,y_+]$ to denote the sub-rectangle of $R$ that lies between the $y$-coordinates $y_-$ and $y_+$. 

To avoid boundary conditions in the definitions that follow, we add two rectangles: $R_{top}$ above all rectangles in $\mathcal{R}$, and $R_{bot}$ below
all rectangles in $\mathcal{R}$, that is $y_-(R_{top}) > \max_{R\in \mathcal{R}} y_+(R)$, and $y_+(R_{bot}) <\min_{R\in \mathcal{R}} y_-(R)$.
The rectangles $R_{top}$, and $R_{bot}$ span the width of all rectangles, i.e., $x_-(R_{top}) = x_-(R_{bot}) < \min_{R\in\mathcal{R}} x_-(R)$, and
$x_+(R_{top}) = x_+(R_{bot}) > \max_{R\in\mathcal{R}} x_+(R)$.
We add two points $P_{top} = \{p^+_1, p^+_2\}$ to the interior of $R_{top}$, and two points $P_{bot} = \{p^-_1, p^-_2\}$  to the interior of $R_{bot}$,
such that $x(p^+_1) = x(p^-_1) < \min_{p\in P} x(p)$, and $x(p^+_2) = x(p^-_2) > \max_{p\in P} x(p)$.
Let $\mathcal{R}' = \mathcal{R}\cup \{R_{top},R_{bot}\}$, and $P' = P\cup P_{top} \cup P_{bot}$.
For ease of notation, we simply use $\mathcal{R}$ and $P$ to denote $\mathcal{R}'$ and $P'$ respectively, and implicitly assume the existence of $R_{top},R_{bot},
P_{top}$ and $P_{bot}$.

For a vertical segment $s$, a rectangle $R\in\mathcal{R}'$ is said to be \emph{active} at $s$ if it is either
discretely
pierced by $s$ i.e., $R\setminus s$ is not connected and each of the two components contains a point of $P$, or there is a point of $P\cap s$ in $R$.
We denote the set of all active rectangles at $s$ by $\Active(s)$.
For a point $p\in P\cap s$, we define $\Contain(s,p)$ to be
the set of rectangles in $\Active(s)$ that contains the point $p$.
We define $\Above(s,p)$ to be the set
of rectangles in $\Active(s)$ that lie strictly above $p$, i.e., $\Above(s,p) = \{R\in\Active(s): y_-(R)>y(p)\}$.
Similarly, $\Below(s,p)=\{R\in\Active(s):y_+(R)<y(p)\}$. It follows that for any point $p\in s$,
$\Active(s)=\Contain(s,p)\sqcup\Above(s,p)\sqcup\Below(s,p)$, where $\sqcup$ denotes disjoint union. 

Note that for the vertical line $\ell_p$ through $p\in P$,  $\Active(\ell_p)\neq\emptyset$, as $\Active(\ell_p)$
contains the rectangles $R_{top}$ and
$R_{bot}$. Similarly, $\Above(\ell_p,p)\neq\emptyset$ and $\Below(\ell_p,p)\neq\emptyset$.
Abusing notations slightly, we write $\Active(p)$ instead of 
$\Active(\ell_p)$, and likewise with 
$\Contain(\cdot), \Above(\cdot)$ and 
$\Below(\cdot)$.

For a point $p\in P$, we now introduce the notion of \emph{barriers}. Any active rectangle $R'$ in $\Above(p)$ prevents a valid Delaunay edge incident
on $p$ from being incident to a point to the left of $p$ above $y_+(R')$, as such an edge would discretely pierce $R'$. 
Hence, among all rectangles $R'\in \Above(p)$, the one with lowest
$y_+(R')$ is called the \emph{upper barrier} at $p$, denoted $\ub(p)$. Thus, 
$
\ub(p)=\argmin_{\substack{R'\in\Above(p)}}y_+(R')
$.

Similarly, we define the \emph{lower barrier} of $p$, 
$
\lb(p)=\argmax_{\substack{R'\in\Below(p)}}y_-(R')
$.

Note that $\ub(p)$ and $\lb(p)$ exist for any $p\in P$ since 
 $\Above(p)$ and $\Below(p)$ are non-empty.
 
While the rectangles in $\mathcal{R}'$ are non-piercing, a rectangle $R'\in\Active(p)$ can be discretely pierced by $\Piece(R,p)$.
We thus define the \emph{upper piercing barrier} $\upb(R,p)$ as the rectangle $R'\in\Above(p)$ with the lowest $y_+(R')$
that is pierced by $\Piece(R,p)$, and we define the \emph{lower piercing barrier} $\lpb(R,p)$ analogously. That is,

\begin{center}
\begin{tabular}{lcr}
$\begin{aligned}
\upb(R,p) = \argmin_{\substack{R'\in\Above(p)\\ \Piece(R,p)\mbox{ pierces } R'}} y_+(R')
\end{aligned} $
& and &
$\begin{aligned}
\lpb(R,p) = \argmax_{\substack{R'\in\Below(p)\\ \Piece(R,p)\mbox{ pierces } R'}} y_-(R')
\end{aligned}$
\end{tabular}
\end{center}

For a point $p\in P$ and a rectangle $R\in\Contain(p)$, if $\upb(R,p)$ or $\lpb(R,p)$ exist,
then the horizontal line containing $y_+(\upb(R,p))$ together with the horizontal line containing
$y_-(\lpb(R,p))$ naturally split $\Piece(R,p)$ into at most three sub-rectangles called \emph{slabs}.
The point $p$ lies in exactly one of these slabs,
denoted $\Slab(R,p)$. Thus, $\Slab(R,p)$ is the sub-rectangle of $R$ whose left and right-vertical sides are respectively defined
by $x_-(R)$ and $\ell_p$, and
the upper and lower sides are respectively defined by

\begin{figure}[t!]
\centering
\vspace{-2em}
\includegraphics[width=0.5\textwidth]{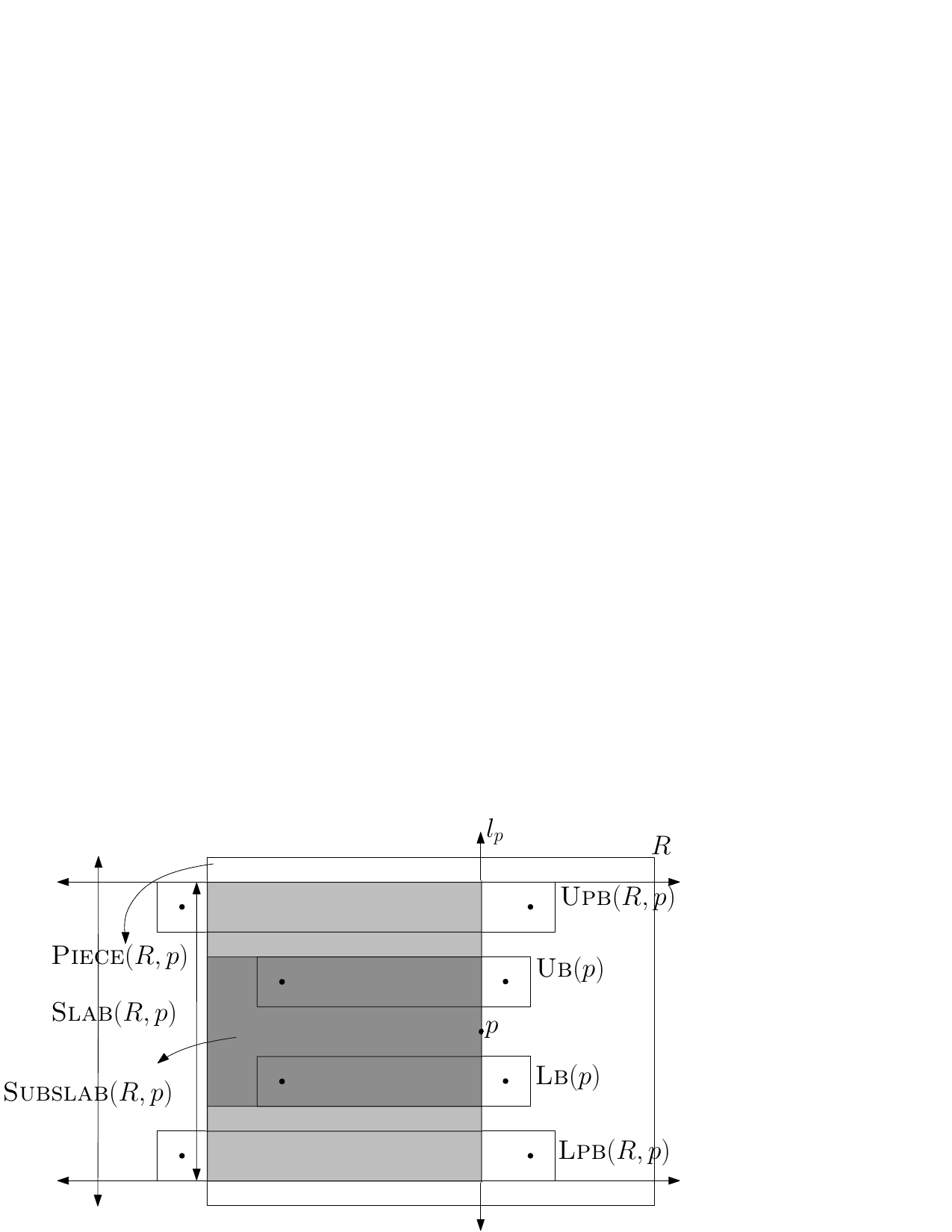}
\caption{The figure above shows $\ub(p),\lb(p)$, and the upper and lower piercing barriers $\lpb(R,p)$ and $\upb(R,p)$ of $\Piece(R,p)$.
The slab $\Slab(R,p)$ containing
$p$ defined by $\upb(R,p)$ and $\lpb(R,p)$ is shaded.
The dark grey part shows the $\Subslab(R,p)$.}
\label{fig:linebarrierslicer}
\end{figure}

\[
y_+(\Slab(R,p)) = \left\{\begin{array}{ll} y_+(\upb(R,p)), & \mbox{if } \upb(R,p)\mbox{ exists}\\
                                           y_+(R), & \mbox{ otherwise}\end{array}\right.
\]

and similarly,
\[
y_-(\Slab(R,p)) = \left\{\begin{array}{ll} y_-(\lpb(R,p)), & \mbox{if } \lpb(R,p)\mbox{ exists}\\
                                           y_-(R), & \mbox{ otherwise}\end{array}\right.
\]

By definition, for a point $p$ and $R\in\Contain(p)$, if $\upb(R,p)$ exists, then $y_+(\upb(R,p)) \allowbreak \ge \allowbreak y_+(\ub(p))$.
Similarly, if $\lpb(R,p)$ exists, then ${y_-(\lpb(R,p))\le y_-(\lb(p))}$.
Thus, $y_+(\ub(p))$ and $y_-(\lb(p))$ together
split $\Slab(R,p)$ further into at most 3 sub-rectangles called \emph{sub-slabs} whose vertical sides coincide with the vertical sides of $\Slab(R,p)$, and the horizontal sides are defined by $y_+(\ub(p))$ and $y_-(\lb(p))$.
Let $\Subslab(R,p)$ denote the sub-slab containing $p$. 
Figure~\ref{fig:linebarrierslicer} illustrates the notions defined thus far. 
Note that the left-adjacent vertices of $p$ in $G$ that are contained in $R$, only lie in $\Subslab(R,p)$.

\smallskip
\noindent
{\bf Proof Strategy:} 
To prove that the graph $G$ constructed by Algorithm~\ref{alg:prim} is a support for $\mathcal{R}$, we proceed in two steps.
First (and the part that requires most of the work) we show that for each $R\in\mathcal{R}$ and $p\in P\cap R$, the subgraph of $G$ induced by the points in $\Slab(R,p)$
is connected. Second, we show that if $p$ is the rightmost point in $R$, then $\Slab(R,p)$ contains all points in $R\cap P$
which, by the first part, is connected.

When processing a point $p$, Algorithm~\ref{alg:prim} only
adds valid Delaunay edges from $p$ to points to its left. That is, we only add edges to a subset of points in 
$\Subslab(R,p)$. To show that $\Slab(R,p)$ is connected, one approach could be to show that the $\Slab(R,p)$ is
covered by sub-slabs defined by points in $\Slab(R,p)$, adjacent sub-slabs share a point of $P$, and that
points in a sub-slab induce a connected subgraph. Unfortunately, this is not true, and we require a finer partition
of a slab.
We proceed as follows: 
First, we define a sequence of sub-rectangles of $\Slab(R,p)$ called \emph{strips}, 
denoted $\Strip(R,p,i)$ for $i\in\{-t,\ldots, k\}$, where the strips that lie above $p$ have positive indices, 
the strips that lie below $p$ have negative indices, and the unique strip that contains $p$ has index $0$. 
Further, each strip shares its vertical sides with $\Slab(R,p)$. 
In the following, since $R$ and $p$ are fixed, we refer to 
$\Strip(R,p,i)$ as $\Strip_i$. We define the strips
so that they satisfy the following conditions:
\begin{enumerate}[(s-i)]
\item Each strip is contained in the slab, i.e, $\Strip_i\subseteq\Slab(R,p)$ for each $i\in\{-t,\ldots, k\}$\label{cond:S1}.
\item The union of strips cover the slab, i.e., $\Slab(R,p)\subseteq \cup_{i=-t}^{k}\Strip_i$, and\label{cond:S2}
    \item Consecutive strips contain a point of $P$ in their intersection, i.e., $\Strip_i\cap\Strip_{i-1}\cap P \neq\emptyset$ for all $i\in\{-t+1,\ldots, k\}$. Consequently, each strip contains a point of $P$. \label{cond:S4}
\end{enumerate}

In order to prove that $\Slab(R,p)$ is connected, we describe below a strategy that does not quite work 
but, as we show later, can be fixed.
 
Let $\Strip_i\cap P = P_i$. By Condition (s-\ref{cond:S4}), $P_i\neq\emptyset$ for any $i\in\{-t,\ldots, k\}$.
For a strip $\Strip_i$, let $p_i$ denote the rightmost point in it. 
Let us assume for now that for each $i\in\{-t,\ldots, k\}$, and 
each point $q\in P_i$, there is a path from $q$ to $p_i$ that lies entirely in $\Strip_i$.\footnote{This assumption is incorrect but will be remedied later.}
Now, consider an arbitrary point $q\in\Slab(R,p)$.
By Condition (s-\ref{cond:S2}) each point in $P\cap \Slab(R,p)$ is contained in at least one strip. Therefore, $q\in\Strip_i$ for some $i\in\{-t,\ldots, k\}$.
By our assumption, there is a path $\pi^1_i$ from $q$ to $p_i$ that lies entirely in $\Strip_i$.
If $i \ge 0$ (a symmetric argument works when $i < 0$), since Condition (s-\ref{cond:S4}) implies
consecutive strips intersect at a point in $P$, there is a path $\pi^2_i$ from $p_i$ to a point $q'\in P_i\cap P_{i-1}$ that lies entirely in
$\Strip_i$.
Again, by our assumption, there is a path $\pi^1_{i-1}$ from $q'$ to $p_{i-1}$ that lies entirely in $\Strip_{i-1}$. 
Repeating the argument above with $i-1, i-2,\ldots$, until $i=0$, and concatenating the paths $\pi^1_i, \pi^2_i,\pi^1_{i-1},\ldots$, we obtain a path $\pi$ from $q$ to $p$, each sub-path of which is a path from a point in a strip to the rightmost point in that strip such that each point in the path lies entirely in the strip.
By Condition (s-\ref{cond:S1}), $\Strip_i\subseteq\Slab(R,p)$ for each $i\in\{-t,\ldots, k\}$. Therefore, $\pi$ lies entirely in $\Slab(R,p)$. 
Since $q$ was arbitrary, this implies that $\Slab(R,p)$ is connected. 

Consider a slab $\Slab(R,p)$ corresponding to a rectangle $R\in\mathcal{R}$ 
and a point $p\in P\cap R$. The strips corresponding to $\Slab(R,p)$ are defined 
as follows:
Let $s$ denote the open segment of $\ell_p$ 
between $p$ and $y_+(\Slab(R,p))$ of $\ell_p$, 
the vertical line through $p$. Let $\mathcal{R}_s=(R_0, \ldots, R_{h})$ 
be the rectangles in
$\Active(s)$ ordered by their upper sides i.e., $ y_+(R_i)<y_+(R_j)$, for
$0\le i < j\le h$.
Similarly, let $s'$ denote the open segment of $\ell_p$ between $p$ and
$y_-(\Slab(R,p))$ and  let $\mathcal{R}_{s'}=(R'_0,\ldots, R'_{h'})$ denote the 
rectangles in $\Active(s')$ ordered by their lower sides $y_-(R'_i)>y_-(R'_j)$ for $0\le i < j\le h'$. 

We define  $\Strip_0=\Slab(R,p)[y_-(R'_0), y_+(R_0)]$, if $\Active(s)\neq\emptyset$ and
$\Active(s')\neq\emptyset$. If $\Active(s)=\emptyset$, we set $y_+(\Strip_0)=y_+(\Slab(R,p))$. Similarly,
if $\Active(s')=\emptyset$, we set $y_-(\Strip_0)=y_-(\Slab(R,p))$. We set $p_0=p$.
Having defined $\Strip_0$, we set $\mathcal{R}_s=\mathcal{R}_s\setminus R_0$ and $\mathcal{R}_{s'}=\mathcal{R}_{s'}\setminus R'_0$.

\begin{figure}[h!]
\begin{center}
\includegraphics[width=2in]{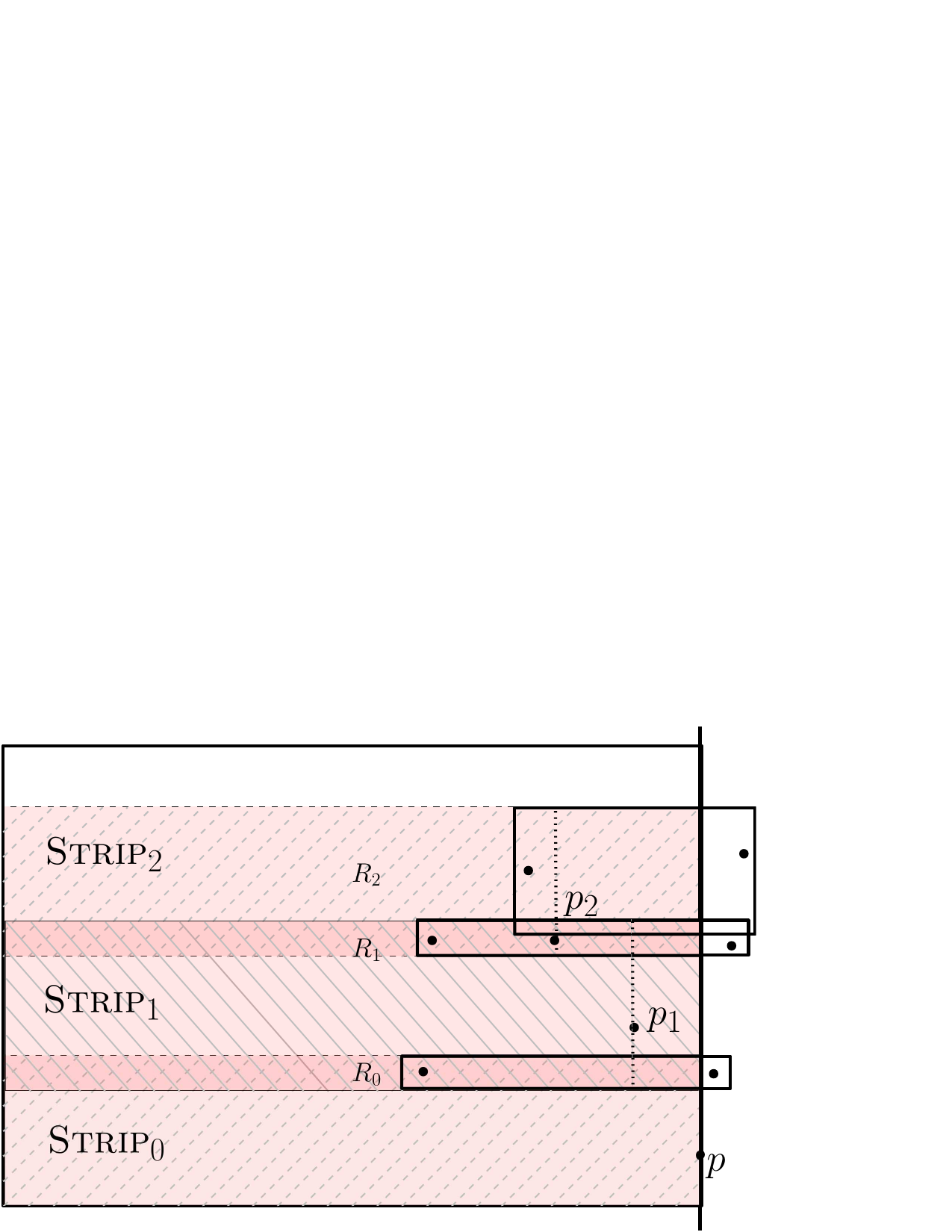}
\caption{The figure shows the construction of the strips $\Strip_0$, $\Strip_1$ and $\Strip_2$. The vertical
line segment through $p_i$, $i\in\{1,2\}$ shows that $p_i$ is the rightmost point among the points in the strip $i$.}
\label{fig:stripConstruction}
\end{center}
\end{figure}

For $i > 0$, having constructed $\Strip_j$ for $j=0,\ldots, i-1$,
we do the following while $\mathcal{R}_s\neq\emptyset$:
Let $S_{i}=\argmin_{R'\in\mathcal{R}_s} y_-(R')$, and let
$y_-=y_-(S_{i})$.
Let $R_i = \argmin\{y_+(R'): R'\in\mathcal{R}_s: y_-(R')>y_-\}$, 
and let $y_+=\min\{y_+(\Slab(R,p)), y_+(R_i)\}$.
Set $y_-(\Strip_i)=y_-$ and $y_+(\Strip_i)=y_+$.
Let $p_i=\argmax\{x(p'):p'\in P\cap\Slab(R,p): y_-<y(p')<y_+\}$.
Note that $p_i$ exists since $S_{i}\in\Active(s)$.
Set $\mathcal{R}_s=\mathcal{R}_s\setminus\{R':y_-(R')<y_-(R_i)\}$.

For $i < 0$, the construction is symmetric. Having constructed
$\Strip(R,p,j)$ for $j$ from $0$ down to $-i+1$, we do the following until
$\mathcal{R}_{s'}=\emptyset$. Let
$S'_i=\argmax_{R'\in\mathcal{R}_{s'}} y_+(R')$, and let $y_+=y_+(S'_i)$.
Let $R'_i=\argmax\{y_-(R'): R'\in\mathcal{R}_{s'}, y_+(R')<y_+\}$.
Let $y_-=\max\{y_-(R'_i),y_-(\Slab(R,p))\}$.
Set $y_-(\Strip_i)=y_-$ and $y_+(\Strip_i)=y_+$.
Let $p_i=\argmax\{x(p'):
p'\in P\cap\Slab(R,p), y_-<y(p')<y_+\}$.
Again, $p_i$ exists since
$S'_i\in\Active(s')$.
Set $\mathcal{R}_{s'}=\mathcal{R}_{s'}\setminus\{R'\in \mathcal{R}_{s'}: y_+(R')>y_+(R'_i)\}$.
Figure \ref{fig:stripConstruction} 
illustrates the construction of the strips.

\begin{restatable}{proposition}{propstrip}\label{prop:strip}
For $i\in\{-t,\ldots, k\}$,
\begin{align*}
y_-(\lb(p_i))\le y_-(\Strip_i)<y_+(\Strip_i) \le y_+(\ub(p_i))
\end{align*}
\end{restatable}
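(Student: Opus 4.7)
I prove only the upper inequality $y_+(\Strip_i) \le y_+(\ub(p_i))$; the lower inequality $y_-(\lb(p_i)) \le y_-(\Strip_i)$ follows by a completely symmetric argument using $\Below(p_i)$ and the dual sequence $(S'_i, R'_i)$, and the middle inequality is immediate because $\Strip_i$ is a nondegenerate rectangle by construction. The strategy is by contradiction: assume some $R^* \in \Above(p_i)$ has $y_+(R^*) < y_+(\Strip_i)$, and show that $R^*$ contradicts either the extremality of $p_i$ as the rightmost element of $\Strip_i \cap P \cap \Slab(R,p)$, or the minimality of $R_i$ in the construction of $\Strip_i$.

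First I extract witnesses. Since $R^* \in \Above(p_i)$, $R^*$ is active at $\ell_{p_i}$ but does not contain $p_i$; general position makes $p_i$ the only point of $P$ on $\ell_{p_i}$, so $R^*$ must be discretely pierced by $\ell_{p_i}$, producing witnesses $q_L, q_R \in R^* \cap P$ with $x(q_L) < x(p_i) < x(q_R)$ and $y(q_L), y(q_R) \in (y_-(R^*), y_+(R^*)) \subset (y(p_i), y_+(\Strip_i))$. The inequalities $y_+(R^*) < y_+(\Strip_i) \le y_+(\Slab(R,p)) \le y_+(R)$ and $y_-(R^*) > y(p_i) > y_-(\Strip_i) \ge y_-(R)$ place the $y$-range of $R^*$ strictly inside the $y$-range of $R$, so non-piercing of $\mathcal{R}$ forces the $x$-range of $R^*$ to lie inside that of $R$ or to extend outside $R$ on at most one horizontal side, which in particular puts $q_R$ in $R$ whenever $x(q_R) \le x_+(R)$.

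If $x(q_R) < x(p)$, then $q_R \in R \cap \{x < x(p)\} \cap \{y \in (y_-(\Strip_i), y_+(\Strip_i))\} \subseteq \Slab(R,p) \cap \Strip_i$, so $q_R \in \Slab(R,p) \cap \Strip_i \cap P$ with $x(q_R) > x(p_i)$, contradicting the choice of $p_i$. Otherwise $x(q_R) > x(p)$, and $R^*$ crosses $\ell_p$; its intersection with $\ell_p$ lies strictly above $p$ and below $y_+(\Slab(R,p))$ and hence in $s$, while $q_L$ and $q_R$ witness that $R^*$ is discretely pierced by $s$, so $R^* \in \Active(s)$. Using the strict monotonicity $y_-(R_1) < \cdots < y_-(R_{i-1}) \le y_-(S_i) < y(p_i) < y_-(R^*)$, $R^*$ is not removed at any step $j \in \{1,\ldots,i-1\}$; the residual subcase $R^* = R_0$ (removed explicitly at step $0$) is ruled out by a separate structural argument, in which $R_0 \in \Above(p_i)$ with $y_+(R_0) < y_+(\Strip_i)$ is shown to combine with non-piercing between $R_0$, $R$, and a witness for $\ub(p_i)$ to force a piercing relation in $\mathcal{R}$, a contradiction. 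Hence $R^*$ belongs to the current state of $\mathcal{R}_s$ at step $i$ and satisfies $y_-(R^*) > y_-(S_i)$, so it is a valid candidate for $R_i$; the minimality of $R_i$ then gives $y_+(R_i) \le y_+(R^*) < y_+(\Strip_i) \le y_+(R_i)$, the desired contradiction.

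The main obstacle is the subcase $x(q_R) > x(p)$, and especially the boundary case $R^* = R_0$: because $R_0$ is removed explicitly after step $0$ rather than via the $y_-$ criterion governing later removals, excluding it requires a direct application of the non-piercing hypothesis on $\mathcal{R}$ rather than the bookkeeping argument about the state of $\mathcal{R}_s$ that handles the generic case.
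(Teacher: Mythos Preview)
The paper's argument is much shorter and more direct than yours: for the upper inequality it simply observes that if $y_+(\ub(p_i)) \le y_+(\Slab(R,p))$ then $\ub(p_i)$ belongs to $\mathcal{R}_s$ with $y_-(\ub(p_i)) > y_-(\Strip_i)$, so $\ub(p_i)$ is one of the candidates in the minimum defining $y_+(\Strip_i)$, whence $y_+(\Strip_i) \le y_+(\ub(p_i))$. You instead argue by contradiction, extract explicit witnesses $q_L, q_R \in R^{*} \cap P$ from the discrete piercing of $R^{*}=\ub(p_i)$ by $\ell_{p_i}$, and split on whether $x(q_R) < x(p)$ or $x(q_R) > x(p)$. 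Your first branch, where $q_R$ lands in $\Strip_i \cap \Slab(R,p)$ to the right of $p_i$ and contradicts the extremality of $p_i$, is a genuine addition: the paper's bare assertion ``$\ub(p_i)\in\mathcal{R}_s$'' tacitly assumes that $\ub(p_i)$ is discretely pierced by $s\subset\ell_p$ rather than merely by $\ell_{p_i}$, and your $x(q_R)<x(p)$ case is exactly what disposes of the alternative.

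Your treatment of the subcase $R^{*}=R_0$ in the second branch, however, is not a proof. You say it ``is ruled out by a separate structural argument'' invoking ``non-piercing between $R_0$, $R$, and a witness for $\ub(p_i)$'', but in this subcase $\ub(p_i)$ \emph{is} $R_0$, so that phrase is circular and no contradiction is actually derived. Your bookkeeping chain $y_-(R_1)<\cdots<y_-(R_{i-1})\le y_-(S_i)<y(p_i)<y_-(R^{*})$ gives no information about $R_0$, since $R_0$ is removed explicitly at step~$0$ and not by the $y_-$ threshold governing later removals; nothing you have written excludes $R_0$ sitting strictly above $p_i$ with $y_+(R_0)<y_+(R_i)$. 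The paper's proof has the same blind spot---the minimum it invokes is effectively over the state of $\mathcal{R}_s$ at step $i$, from which $R_0$ has already been deleted---so neither argument, as written, closes the case $\ub(p_i)=R_0$.
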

\begin{proof}
Fix $i\in\{-t,\ldots, k\}$ and assume $i\ge 0$. For $i<0$, the proof is symmetric. 
Since $p_i\in\Strip_i$ and by the
definition of the lower barrier, $y_+(\lb(p_i))<y(p_i)<y_+(\Strip_i)$. 
If $y_-(\lb(p_i))>y_-(\Strip_i)$, since $\lb(p_i)\in\mathcal{R}_s$ and
$y_+(\Strip_i)\le\min\{y_+(R')\in\mathcal{R}_s: R'\in\mathcal{R}_s\mbox{ and } y_-(R')>y_-(\Strip_i)\}$, it implies
$y_+(\Strip_i)\le y_+(\lb(p_i))$, contradicting $p_i\in\Strip_i$. 

Now we argue about $\ub(p_i)$.
If $y_+(\ub(p_i))>y_+(\Slab(R,p))$, since $y_+(\Slab(R,p))\ge y_+(\Strip_i)$,
we have $y_+(\ub(p_i))\ge y_+(\Strip_i)$.
Otherwise, we have $\ub(p_i)\in\mathcal{R}_s$.
Since $p_i\in\Strip_i$ and 
by definition of the upper barrier, we have $y_-(\ub(p_i))>y(p_i)>y_-(\Strip_i)$.
Since
$y_+(\Strip_i)\le\min\{y_+(R'): R'\in\mathcal{R}_s\mbox{ and } y_-(R')>y_-(\Strip_i)\}$,
it follows that
$y_+(\Strip_i)\le y_+(\ub(p_i))$.
\end{proof}

\begin{restatable}{lemma}{strips}\label{lem:strips}
The strips constructed as above satisfy the following conditions:
$(i)$ $\Strip_i\subseteq\Slab(R,p)$ for each $i\in\{-t,\ldots, k\}$\label{cond:s1}.
$(ii)$ $\Slab(R,p) = \cup_{i=-t}^{k}\Strip_i$, and\label{cond:s2}
$(iii)$ $\Strip_i\cap\Strip_{i-1}\cap P \neq\emptyset$ for all $i\in\{-t+1,\ldots, k\}$. \label{cond:s4}
\end{restatable}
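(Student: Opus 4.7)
I would prove the three conditions in turn, leveraging the strip construction, Proposition~\ref{prop:strip}, the non-piercing property, and general position. A crucial preliminary observation is that, by general position, no point of $P$ other than $p$ lies on the vertical line $\ell_p$. Therefore every rectangle in $\mathcal{R}_s$ (resp.\ $\mathcal{R}_{s'}$) must be \emph{discretely pierced} by $s$ (resp.\ $s'$), since the ``contains a point of $P \cap s$'' alternative in the definition of $\Active(s)$ is vacuous. In particular, if $R' \in \mathcal{R}_s$ then $y(p) < y_-(R') < y_+(R') < y_+(\Slab(R,p))$, and $R'$ has a point of $P$ in each of its left and right halves (relative to $\ell_p$).

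For Part (i), the upper boundary of each strip is explicitly capped at $y_+(\Slab(R,p))$ by the $\min$ in the construction. For the lower boundary of $\Strip_i$ with $i>0$, the observation above gives $y_-(\Strip_i) = y_-(S_i) > y(p) > y_-(\Slab(R,p))$; the symmetric argument works for $i<0$.

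For Part (ii), I would induct on $i\ge 0$ to show that $\bigcup_{j=0}^{i}\Strip_j$ covers $[y(p),y_+(\Strip_i)]$ within the slab's $x$-range (and symmetrically below $p$). The key is that after iteration $i-1$ the removal rule $\mathcal{R}_s \leftarrow \mathcal{R}_s \setminus \{R' : y_-(R')<y_-(R_{i-1})\}$ leaves $R_{i-1}$ itself in $\mathcal{R}_s$, so by general position $S_i = R_{i-1}$ for $i\ge 2$. This yields $y_-(\Strip_i) = y_-(R_{i-1}) < y_+(R_{i-1})$, forcing overlap with $\Strip_{i-1}$. The $i=1$ step must be handled separately: I would show that if there were a ``gap'' between $\Strip_0$ and $\Strip_1$ (i.e.\ $y_-(S_1)>y_+(R_0)$), then any point $q$ of $P \cap \Slab(R,p)$ with $y(q)$ in the gap would belong only to rectangles in $\Contain(p)$ (hence not in $\mathcal{R}_s$), and conversely would force some other rectangle in $\mathcal{R}_s$ with $y_-$ smaller than $y_-(S_1)$, contradicting the choice of $S_1$. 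Termination at $\mathcal{R}_s = \emptyset$ yields $y_+(\Strip_k) = y_+(\Slab(R,p))$.

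For Part (iii), given the overlap from Part (ii), I need a point of $P$ witnessing it. Since $S_i = R_{i-1}$ is discretely pierced by $s$ (for $i\ge 2$), its left half contains some $q\in P$ with $y(q)\in[y_-(R_{i-1}),y_+(R_{i-1})]\subseteq\Strip_{i-1}\cap\Strip_i$. The heart of the proof is showing $q\in\Slab(R,p)$, equivalently $x(q)\ge x_-(R)$. I would argue by contradiction: if $x(q)<x_-(R)$, then $\Piece(R,p)$ discretely pierces $R_{i-1}$ (the left half witnessed by $q$ outside $R$, the right half witnessed by the other piercing point of $s$ inside $R_{i-1}$). This would make $R_{i-1}$ a candidate for $\upb(R,p)$, but then $y_+(R_{i-1})<y_+(\Slab(R,p))=y_+(\upb(R,p))$ contradicts the minimality in the definition of $\upb(R,p)$. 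The case $i=1$ uses the analogous argument on $S_1$, and symmetric arguments handle $i<0$.

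The main obstacle will be Part (iii), and specifically the containment argument $x(q) \ge x_-(R)$: the non-trivial coupling between discrete piercing of $R_{i-1}$ by $\Piece(R,p)$ and the minimality definition of the upper piercing barrier $\upb(R,p)$ is the crux, and must be set up carefully using the non-piercing property together with the fact that $R_{i-1}$ lies in $\Above(p)$. Secondary care is required in Part (ii) to dispose of the $i=1$ step and of the termination when $\mathcal{R}_s$ becomes empty before reaching $y_+(\Slab(R,p))$.
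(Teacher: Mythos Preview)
The paper's proof is extremely terse—claiming that (i) and (ii) ``follow directly by construction'' and that (iii) holds because ``adjacent strips contain a piece of an active rectangle.'' Your proposal is far more careful and correctly isolates the non-obvious content of (iii): that the witnessing point coming from the discretely pierced rectangle $S_i=R_{i-1}$ actually lies inside $\Slab(R,p)$ (i.e.\ has $x$-coordinate at least $x_-(R)$). Your contradiction via $\upb(R,p)$ is exactly right for $i\ge 2$ and supplies precisely what the paper's one-liner omits. The observation $S_i=R_{i-1}$ (from the strict-inequality removal rule together with general position) is also the correct engine for the overlap in (ii) when $i\ge 2$.

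The genuine gap is the $i=1$ transition, for both (ii) and (iii). Your sketch asserts that a point $q$ in a putative gap $(y_+(R_0),\,y_-(S_1))$ would ``belong only to rectangles in $\Contain(p)$'' and would ``force some other rectangle in $\mathcal{R}_s$ with $y_-$ smaller than $y_-(S_1)$,'' but neither claim follows: $q$ need not lie in any input rectangle besides $R$ itself, and nothing in the construction ties $y_-(S_1)$ to $y_+(R_0)$ once $R_0$ has been removed. Concretely, take $\mathcal{R}_s=\{A,B\}$ with disjoint $y$-ranges $[1,2]$ and $[3,4]$, both with $x$-ranges strictly inside that of $R$ (so $A,B\subset R$ and non-piercing holds), and both discretely pierced by $s$. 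Then $R_0=A$, $y_+(\Strip_0)=2$, and after deleting $A$ we get $S_1=B$, $y_-(\Strip_1)=3$. A point at height $2.5$ inside $\Slab(R,p)$ lies in no strip, so your argument for (ii) does not close; and since $\Strip_0\cap\Strip_1=\emptyset$ here, (iii) at $i=1$ cannot hold either. The paper's own proof is equally silent on this boundary case, so your difficulty reflects a real lacuna in the stated construction rather than a flaw peculiar to your approach.
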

\begin{proof}
Item $(i)$ and $(ii)$ follow directly by construction.
For $(iii)$, note that adjacent strips contain a piece of an active rectangle and hence their intersection contains a point of $P$. 
\end{proof}

Unfortunately, our assumption that every point in a strip
has a path to its rightmost point in the strip is not correct.
To see this, consider a strip that is pierced by a rectangle $R'$, whose intersection with the strip does not contain a point of $P$. Therefore, a point in the strip that lies to the left of $R'$ cannot have a path to $p_i$ that
lies in the strip unless some of its edge is allowed to pierce $R'$.
In order to remedy this situation, we introduce the notion of a \emph{corridor}. A corridor
corresponding to a strip is a region of $\Slab(R,p)$ that contains all points
in the strip, and such that each point in the strip has a path to the rightmost point in it that lies entirely in the corridor. 
Since each corridor lies in $\Slab(R,p)$, 
the proof strategy can be suitably modified to show that $\Slab(R,p)$ is connected. 


We now define the corridors associated with each strip. 
Recall that $G=(P,E)$ is the graph constructed by Algorithm \ref{alg:prim}.
For a point $q\in P$, recall that the neighbors of $q$ in $G$ that lie
to its left are called its \emph{left-adjacent} points. If $q$ lies on a path $\pi$,
and $q'$ is the left-adjacent point of $q$ on $\pi$, then we say that $q'$ is
left-adjacent to $q$ on $\pi$.
We start with the following
proposition that will be useful in constructing the corridors.

\begin{restatable}{proposition}{deledges}\label{prop:deledges}
For a point $q\in P$,
if $(q_1,\ldots, q_r, q_{r+1},\ldots, q_s)$ is the sequence of
its left-adjacent points in $G$ 
s.t. $y(q_1)>\ldots> y(q_r)>y(q)$ and $y(q_{r+1})<\ldots< y(q_s)<y(q)$.  
Then, for $1\le i<j\le r$, $x(q_i)>x(q_j)$, and for $r+1\le i<j\le s$, $x(q_i)<x(q_j)$.
\end{restatable}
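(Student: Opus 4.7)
The plan is to prove both inequalities by contradiction, invoking only the empty-rectangle property of Delaunay edges: every edge $qq_i$ in $G$ is added by Algorithm~\ref{alg:prim} as a valid Delaunay edge, so the open axis-parallel rectangle $R(q,q_i)$ contains no point of $P$ in its interior. I will show that the wrong relative position of two left-adjacent points would force one of them into the empty interior of the other's Delaunay rectangle.

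For the upper case, fix $1 \le i < j \le r$, so that $y(q) < y(q_j) < y(q_i)$ and both points lie to the left of $q$. The key rectangle is $R(q,q_i) = [x(q_i),x(q)] \times [y(q),y(q_i)]$. If the claimed inequality failed, i.e.\ $x(q_i) < x(q_j)$, then $x(q_j) \in (x(q_i), x(q))$ and, by the $y$-ordering of the sequence, $y(q_j) \in (y(q), y(q_i))$. This places $q_j$ inside the open interior of $R(q,q_i)$, contradicting the Delaunay property of $qq_i$. General position rules out the equality $x(q_i) = x(q_j)$, yielding $x(q_i) > x(q_j)$.

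The lower case is the mirror image of the upper one across the horizontal line through $q$. For $r+1 \le i < j \le s$ the $y$-ordering swaps to $y(q_i) < y(q_j) < y(q)$, and the relevant rectangle is now $R(q,q_i) = [x(q_i),x(q)] \times [y(q_i),y(q)]$, extending below $q$. The same Delaunay contradiction applies: the disallowed $x$-configuration places $q_j$ inside the interior of $R(q,q_i)$, in violation of the empty-interior property of the edge $qq_i$.

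I do not anticipate a substantive obstacle. The proof uses neither the rectangles of $\mathcal{R}$, nor the non-crossing property of the L-shaped edges, nor the sweep order --- only the empty-interior characterisation of each Delaunay rectangle. The main care point is matching each contradiction with the ``outer'' rectangle $R(q,q_i)$ indexed by the smaller-index point, so that $q_j$ falls inside the correct open interval in both coordinates; once the indices are aligned, the verification of the two coordinate inclusions is immediate from the stated $y$-ordering.
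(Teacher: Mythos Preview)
Your approach---deriving the claim purely from the empty-rectangle property of Delaunay edges---is exactly what the paper does; its proof is the one-liner ``This follows directly from the fact that each edge in $G$ is Delaunay.'' Your upper-case argument is a correct unpacking of that line.

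There is, however, a genuine slip in your lower-case argument. You assert that the forbidden configuration $x(q_i)>x(q_j)$ places $q_j$ in the interior of $R(q,q_i)=[x(q_i),x(q)]\times[y(q_i),y(q)]$. It does not: if $x(q_i)>x(q_j)$ then $x(q_j)<x(q_i)$, so $q_j$ lies strictly to the \emph{left} of $R(q,q_i)$, and no contradiction arises. (Nor does $q_i$ land in $R(q,q_j)$, since $y(q_i)<y(q_j)$.) A concrete instance: $q=(0,0)$, $q_i=(-1,-5)$, $q_j=(-2,-3)$; both $qq_i$ and $qq_j$ are Delaunay, yet $x(q_i)>x(q_j)$.

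The reason your mirror-image heuristic misfires is that the stated inequality for the lower block is itself reversed: reflecting the upper case across the horizontal line through $q$ yields $x(q_i)>x(q_j)$ for $r+1\le i<j\le s$, not $x(q_i)<x(q_j)$. With that corrected sign, your argument goes through verbatim: the forbidden case becomes $x(q_i)<x(q_j)$, and then indeed $x(q_j)\in(x(q_i),x(q))$ and $y(q_j)\in(y(q_i),y(q))$, putting $q_j$ in the interior of $R(q,q_i)$ and contradicting that $qq_i$ is Delaunay. So the method is right; you just need to carry the reflection through carefully rather than assert ``the same contradiction applies.''
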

\begin{proof}
This follows directly from the fact that each edge in $G$ is Delaunay.
\end{proof}

For a strip $\Strip_i$, we define its corresponding corridor $\corridor_i$ as follows: 
The corridor is the region of $\Slab(R,p)$ bounded
by two paths: an \emph{upper path} $\pi^u_i$, and a \emph{lower path} $\pi^\ell_i$,
defined as follows.

The upper path $\pi^u_i = (q_0, q_1, q_2, \ldots,)$ is constructed by starting with $j = 0, q_0 = p_i$, and repeating (1) set $q_{j + 1} \gets q'$ where $q'$ has the highest $y(q')$ among the left-adjacent points of $q_j$. 
(2) $j \gets j + 1$. 
We stop when we cannot find such a $q'$ for the current $q_j$ in $G$ that lies in $\Slab(R,p)$, 
where we complete the path by following the edge to the left-adjacent vertex $q_{j+1}$ of $q_j$
with highest $y$-coordinate. Thus, the last vertex of $\pi^u_i$ possibly does not lie in $\Slab(R,p)$.

The lower path $\pi^{\ell}_i=(q'_0,\ldots,)$ is constructed similarly. 
For $j=0$, set $q'_0=p_i$, and repeating $(1)$ set $q'_{j+1}\gets q'$,
where $q'$ has the \emph{lowest} $y(q')$ among the left-adjacent points of $q'_j$.
We stop when we cannot find such a $q'$ in $\Slab(R,p)$ for the current $q'_j$, where we complete
the path by following the edge from $q'_j$ its left-adjacent vertex $q'_{j+1}$ with
smallest $y$-coordinate. Thus, the last vertex of $\pi^\ell_i$ possibly does not lie in $\Slab(R,p)$.

$\corridor_i$ is
the region of $\Slab(R,p)$ that lies between the upper and lower paths $\pi^{u}_i$ and $\pi^{\ell}_i$. Figure \ref{fig:corridor} 
shows a corridor corresponding to a strip.
We start with some basic observations about the corridors thus constructed.

\begin{figure}[ht!]
\begin{center}
\includegraphics[width=2in]{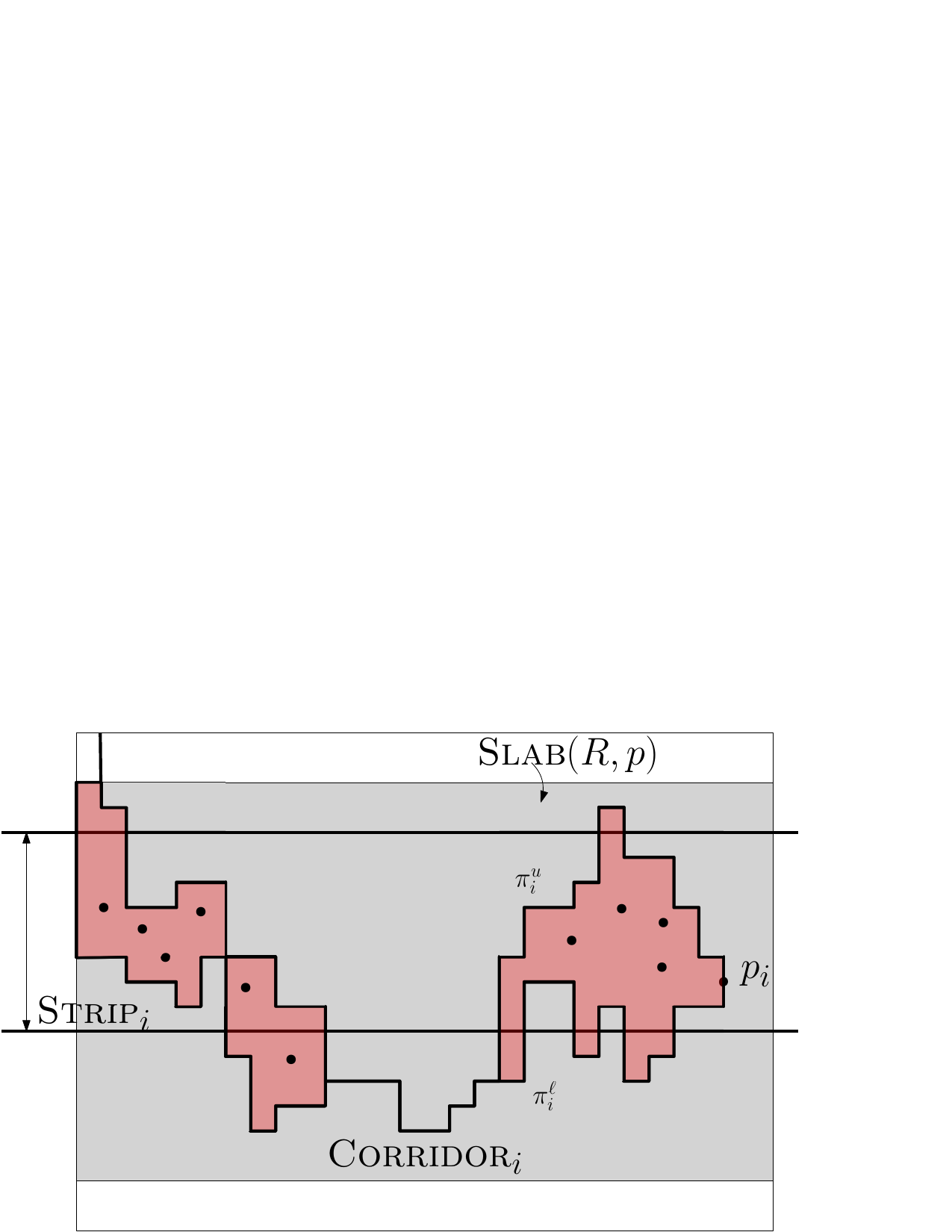}
\caption{The figure above shows the strip $\Strip_i$, the slab $\Slab(R,p)$ in grey, and the corridor $\corridor_i$ as the region shaded in red between $\pi^u_i$ and $\pi^{\ell}_i$.}
\label{fig:corridor}
\end{center}
\end{figure}

\begin{restatable}{proposition}{xmonotone}\label{prop:xmonotone}
For $i\in\{-t,\ldots, k\}$, $\pi^u_i$ and $\pi^{\ell}_i$ are $x$-monotone.
\end{restatable}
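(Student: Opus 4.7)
The approach is to reduce $x$-monotonicity of $\pi^u_i = (q_0, q_1, \ldots, q_s)$ to the single observation that the $x$-coordinates of its vertices strictly decrease along the path; the argument for $\pi^\ell_i$ is entirely symmetric, so I would treat only $\pi^u_i$.

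First, by the construction of $\pi^u_i$, every successor $q_{j+1}$ is a \emph{left-adjacent} neighbor of $q_j$ in $G$, and this remains true even for the final step of the path, which may exit $\Slab(R,p)$, since that last edge is still taken to a left-adjacent vertex (namely the one with highest $y$-coordinate). Hence $x(q_{j+1}) < x(q_j)$ strictly for every $j$. Second, by direct inspection of the shapes $\upleft$ and $\leftup$, each edge $q_jq_{j+1}$ has its vertical segment $v(q_jq_{j+1})$ lying on the line $x = x(q_j)$ (the vertical leg is always attached to the right endpoint) and its horizontal segment $h(q_jq_{j+1})$ lying on the line $y = y(q_{j+1})$, spanning $x \in [x(q_{j+1}), x(q_j)]$.

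With these two facts in hand, $x$-monotonicity follows from a short case analysis on an arbitrary vertical line $\ell : x = x_0$. Either (a) $x_0 \in (x(q_{j+1}), x(q_j))$ for exactly one $j$, and $\ell \cap \pi^u_i$ is a single point on $h(q_jq_{j+1})$ at height $y(q_{j+1})$; (b) $x_0 = x(q_j)$ for some $j$, and $\ell \cap \pi^u_i$ equals $v(q_jq_{j+1})$, joined at the shared endpoint $q_j$ with the horizontal segment $h(q_{j-1}q_j)$ when $j \ge 1$, which is still a single vertical segment; or (c) $x_0 \notin [x(q_s), x(q_0)]$, and $\ell \cap \pi^u_i = \emptyset$. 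In every case the intersection is a single vertical segment, possibly degenerating to a point, which is exactly what the modified definition of $x$-monotonicity for L-edge paths demands. The argument is essentially a definition chase; the only mild care is needed at the boundary cases---the final edge that may leave $\Slab(R,p)$, and vertical lines that pass through some vertex $q_j$---but neither presents a real obstacle.
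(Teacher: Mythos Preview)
Your proposal is correct and follows the same idea as the paper's proof, which is a one-line ``follows directly by construction since at each step we augment the path by adding to it the left-adjacent neighbor to the current vertex.'' You have simply unpacked that sentence into the explicit observation that $x(q_{j+1}) < x(q_j)$ and a short case analysis on the position of a vertical line relative to the strictly decreasing sequence of $x$-coordinates, which is a faithful elaboration rather than a different approach.
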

\begin{proof}
This follows directly by construction since at each step we augment the path by adding to it, the left-adjacent neighbor to the current vertex of the path.
\end{proof}

The graph $G$ constructed by Algorithm \ref{alg:prim} do not cross. 
We say that two paths $\pi_1$ and
$\pi_2$ in $G$ \emph{cross} if there is an $x$-coordinate at which $\pi_1$
lies above $\pi_2$, and an $x$-coordinate at which $\pi_2$ lies above $\pi_1$.

\begin{restatable}{proposition}{obscor}\label{obs:cor}

For $i\in\{-t,\ldots, k\}$, $\pi_i^l$ does not lie above $\pi^u_i$, and
$\pi^{\ell}_i$ and $\pi^u_i$ do not cross.
\end{restatable}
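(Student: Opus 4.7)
The second claim follows from the first, since by the definition of crossing given in the excerpt we would need some $x$-coordinate witnessing $\pi^\ell_i$ above $\pi^u_i$; if no point of $\pi^\ell_i$ ever lies above $\pi^u_i$, then the crossing condition cannot be met. I therefore concentrate on proving the first claim: at every $x$-coordinate where both paths are defined, the $y$-value of $\pi^u_i$ is at least that of $\pi^\ell_i$.

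My strategy is an induction that walks the two paths leftward from their common starting vertex $p_i$. At any vertex $Q$ shared by the two paths (starting with $Q = p_i$), $\pi^u_i$ exits along the edge to the left-adjacent $u'$ with largest $y$-coordinate and $\pi^\ell_i$ along the edge to the left-adjacent $l'$ with smallest $y$-coordinate, so $y(u') \geq y(l')$. Proposition \ref{prop:deledges} pins down the $x$-order of the left-adjacents on each side of $Q$, and combined with the fact that edges of $G$ do not cross transversally this forces the L-shape configurations of the two edges $Qu'$ and $Ql'$ (each of type \upleft[0.2] or \leftup[0.2]) into combinations for which $Qu'$ lies weakly above $Ql'$ in the common $x$-range just to the left of $Q$. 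To propagate the inequality past the pair of edges just constructed, I argue that the upper path cannot subsequently drop below the $y$-level of the lower path: the Delaunay condition on edges of $G$ together with Proposition \ref{prop:deledges} forces the highest left-adjacent of any upper-path vertex $u_j$ to have $y$-coordinate at least that of $\pi^\ell_i$ at $x = x(u_j)$, since otherwise the candidate point on $\pi^\ell_i$ would contradict the Delaunay property of the hypothetical upper-path edge out of $u_j$. A symmetric claim holds for the lower path.

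The main obstacle is this ``propagation'' step, especially in the case in which both $u'$ and $l'$ lie on the same side of $Q$ (say both strictly above), so that the two paths initially travel in the same vertical half-plane and one path could in principle wiggle past the other away from any shared vertex. Here one has to combine Proposition \ref{prop:deledges} (the Delaunay-forced $x$-order of same-side left-adjacents), the Delaunay condition applied to the hypothetical edge between a vertex on one path and a witness point on the other, and the non-crossing property of $G$ to eliminate the configurations in which $\pi^u_i$ could drop below $\pi^\ell_i$ without their first meeting at a shared vertex. Once this propagation step is in hand, chaining the argument through successive shared vertices (or to the ends of the two paths) completes the induction and yields the proposition.
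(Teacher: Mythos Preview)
Your skeleton is the same as the paper's: start at the common vertex $p_i$, use that at any shared vertex the upper path takes the highest left-adjacent and the lower path the lowest, and then argue that the ordering cannot flip between shared vertices. Where you diverge is in the ``propagation'' step, which you treat as the main obstacle and attack with Proposition~\ref{prop:deledges}, Delaunay arguments, and a same-side case analysis.

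The paper dispenses with all of that in one line. Both $\pi^u_i$ and $\pi^\ell_i$ are paths in $G$, and by construction the edges of $G$ are \emph{valid}, hence pairwise non-crossing. Since both paths are $x$-monotone (Proposition~\ref{prop:xmonotone}), if $\pi^\ell_i$ were above $\pi^u_i$ at some $x$-coordinate and below it at another, some edge of one path would have to cross some edge of the other; the only remaining possibility is that the switch happens at a shared vertex of $P$, and you have already handled that case. So the propagation step is immediate once you foreground the non-crossing property of $G$ rather than burying it among several tools. Your proposed route via Delaunay contradictions could likely be completed, but it is doing much more work than needed; the case you flag as the ``main obstacle'' (both $u'$ and $l'$ on the same side of $Q$) evaporates under the non-crossing observation.
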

\begin{proof}
Let $\pi^u_i=(q_0, q_1,\ldots, q_r)$ and $\pi^{\ell}_i=(q'_0,\ldots, q'_s)$, 
where $q_0=q'_0=p_i$. Since $q_1$ is the left-adjacent of $p_i$ in $\Slab(R,p)$
with highest $y$-coordinate and $q'_1$ is the left-adjacent point of $p_i$
with lowest $y$-coordinate, $y(q'_1)\le y(q_1)$. Thus, $\pi^{\ell}_i$ does not
lie above $\pi^u_i$ at $x(p_i)$. If at some $x$-coordinate $x'$,
$\pi^{\ell}_i$ lies above $\pi^{u}_i$, then the paths must have crossed to the right of $x'$.

Let $q_i=q'_j=q$ be a point of $P$ common to $\pi^u_i$ and $\pi^{\ell}_i$ lying
to the left of $p_i$. Again, since 
$q_{i+1}=\arg\max\{y(q''): q''\in\Slab(R,p)\cap P, x(q'')<x(q), \{q,q''\}\in E(G) \}$,
and $q'_{j+1} = \arg\min\{y(q''): q''\in\Slab(R,p)\cap P, x(q'')<x(q), \{q,q''\}\in E(G) \}$,
it follows that $y(q'_{j+1})\le y(q_{i+1})$. Hence, the paths do not cross, and since
$\pi^{\ell}_i$ does not lie above $\pi^u_i$ at $x(p_i)$, it does not do so at any $x$-coordinate
to the left of $p_i$ either.
\end{proof}

Recall that the left-neighbor of a point $q$ in a set $P'$ is the
point $p'=\argmax_{p''\in P'} x(p'')<x(q)$. The right-neighbor is
defined similarly. Note that the left and right neighbors are defined
geometrically, and they may not be adjacent to $q$ in the graph $G$.
For a point $q\in P$ and $i\in\{-t,\ldots, k\}$, 
we let $r_0$ and $r_1$ denote respectively,
the left- and right-neighbors of $q$ on $\pi^u_i$. 
Similarly, we let $r'_0$ and
$r'_1$ denote respectively, the left- and right-neighbors of $q$ on $\pi^{\ell}_i$.

\begin{restatable}{proposition}{below}\label{prop:below}
For $i\in\{-t,\ldots, k\}$ and $q\in P$, if $q$ lies above $\pi^u_i$, then 
$y(q)>\max\{y(r_0), y(r_1)\}$. 
Similarly, if $q$ lies below $\pi^{\ell}_i$, then
$y(q)<\min\{y(r'_0),y(r'_1)\}$.
\end{restatable}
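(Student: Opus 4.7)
The plan is to locate the path $\pi^u_i$ precisely at $x(q)$ using the shape of the unique edge of $\pi^u_i$ that is crossed by $\ell_q$, and then sharpen the resulting bound via the Delaunay property of that edge.

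First I would observe that both $r_0$ and $r_1$ exist and are consecutive on $\pi^u_i$: by hypothesis $\ell_q$ meets $\pi^u_i$, and $x$-monotonicity of $\pi^u_i$ (Proposition~\ref{prop:xmonotone}), together with general position, places $x(q)$ strictly between the $x$-coordinates of two successive vertices of the path, which are exactly $r_0$ and $r_1$. These are joined by an edge of $G$. The second ingredient is the shape of edges: both $\upleft[0.2]$ and $\leftup[0.2]$ types have their horizontal segment at the $y$-coordinate of their left endpoint and their vertical segment at the $x$-coordinate of their right endpoint. Hence at $x=x(q)$ the edge $r_0r_1$ sits at height $y(r_0)$, and ``$q$ above $\pi^u_i$'' immediately yields $y(q) > y(r_0)$.

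If $y(r_1)\le y(r_0)$ the claim is already in hand, so the substantive case is $y(r_1) > y(r_0)$. Here $r_0$ is the bottom-left corner and $r_1$ the top-right corner of $R(r_0r_1)$. Since Algorithm~\ref{alg:prim} adds only Delaunay edges, the interior of $R(r_0r_1)$ is free of points of $P$; combined with $x(q) \in (x(r_0), x(r_1))$ and the already-established $y(q) > y(r_0)$, this forces $y(q) \ge y(r_1)$, and general position (distinct $y$-coordinates) sharpens this to $y(q) > y(r_1) = \max\{y(r_0),y(r_1)\}$. The statement for $\pi^\ell_i$ follows by the mirror-image argument: at $x(q)$ the edge $r'_0r'_1$ is again at height $y(r'_0)$, giving $y(q) < y(r'_0)$, and in the case $y(r'_1) < y(r'_0)$ the Delaunay emptiness of $R(r'_0r'_1)$ (now with $r'_0$ at top-left and $r'_1$ at bottom-right) upgrades this to $y(q) < y(r'_1) = \min\{y(r'_0),y(r'_1)\}$.

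The only delicate point is the Delaunay step: the L-shape alone gives only $y(q) > y(r_0)$, which is too weak precisely when the right endpoint $r_1$ sits higher than $r_0$, and emptiness of $R(r_0r_1)$ is what lifts the bound from $y(r_0)$ to $y(r_1)$.
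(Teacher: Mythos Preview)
Your proof is correct and follows essentially the same approach as the paper: both use $x$-monotonicity (Proposition~\ref{prop:xmonotone}) to conclude that $r_0,r_1$ are consecutive on $\pi^u_i$ and hence joined by a Delaunay edge, and then use the Delaunay property to rule out $y(q)$ lying strictly between $y(r_0)$ and $y(r_1)$. Your version is in fact slightly more explicit than the paper's, since you pin down via the L-shape geometry that the path sits at height $y(r_0)$ at $x(q)$, which cleanly justifies why ``above'' forces $y(q)>\max\{y(r_0),y(r_1)\}$ rather than $y(q)<\min\{y(r_0),y(r_1)\}$.
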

\begin{proof}
We assume $q$ lies above $\pi^u_i$. The other case 
follows by an analogous argument. 
By Proposition \ref{prop:xmonotone}, since $\pi^u_i$
is $x$-monotone, it follows that $r_0$ and $r_1$ are
consecutive along $\pi^u_i$, and thus, $r_0r_1$ is a
valid Delaunay edge in $G$. 
If either $y(r_0)>y(q)>y(r_1)$, or $y(r_0)<y(q)<y(r_1)$, then
it contradicts the fact that $r_0r_1$ is Delaunay. Hence,
$y(q)>\max\{y(r_0),y(r_1)\}$ as $q$ lies above $\pi^u_i$.
\end{proof}

We now show that the corridors constructed
satisfy the required conditions. The first condition below, follows directly from construction.

\begin{restatable}{lemma}{corridorun}\label{lem:corridor1}

For $i\in\{-t,\ldots, k\}$, 
$\corridor_i\subseteq\Slab(R,p)$. \label{cond:c2}
\end{restatable}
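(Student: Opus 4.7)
The plan is to observe that this lemma is immediate from the definition of $\corridor_i$. Recall that when we defined the corridor, we explicitly stated that $\corridor_i$ is \emph{the region of $\Slab(R,p)$ that lies between the upper and lower paths $\pi^u_i$ and $\pi^\ell_i$}. In other words, the slab itself appears in the very definition as the ambient region, so $\corridor_i \subseteq \Slab(R,p)$ holds by construction. There is no hidden computation to perform.

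The only mild subtlety to acknowledge is that the paths $\pi^u_i$ and $\pi^\ell_i$ themselves are permitted to end at a vertex that lies outside $\Slab(R,p)$ (the construction completes the path by following one final edge to the left-adjacent vertex with extreme $y$-coordinate, which may escape the slab). However, this does not threaten the inclusion: when we form the corridor we clip to $\Slab(R,p)$, so any portion of the bounding paths that extends beyond the slab simply does not contribute to $\corridor_i$. The clipping is exactly what ``region of $\Slab(R,p)$ lying between $\pi^u_i$ and $\pi^\ell_i$'' means.

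Since this argument does not depend on $i$, it covers the full range $i \in \{-t,\ldots,k\}$ uniformly. Thus the proof consists of a single line unpacking the definition of $\corridor_i$, and there is no main obstacle.
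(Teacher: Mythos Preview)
Your proposal is correct and matches the paper's own proof, which simply states ``Follows directly by construction.'' Your additional remark about the last vertex of $\pi^u_i$ or $\pi^\ell_i$ possibly leaving the slab is a reasonable clarification, but as you note, the clipping in the definition handles it, so nothing beyond the one-line observation is needed.
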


\begin{proof}
Follows directly by construction.
\end{proof}

Next, we show that for each strip, its corresponding corridor contains all its points,
that is all points in $P\cap \Strip_i$ are contained between the upper and lower
paths of $\corridor_i$.
Before we do that, we need the following two technical statements.

\begin{restatable}{proposition}{rightEdge}\label{prop:rightEdge}
Let $q,q'\in P$, with $x(q)<x(q')$ s.t. $qq'$ is Delaunay. If $qq'\not\in E(G)$,
then either $(i)$ $h(qq')$ pierces a rectangle,
or crosses an existing edge, or $(ii)$ $v(qq')$ pierces a rectangle. In particular, $v(qq')$ does not cross an existing edge.
\end{restatable}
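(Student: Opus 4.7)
The proof traces what Algorithm~\ref{alg:prim} does at $q'$. Since $x(q)<x(q')$, the point $q$ has been processed before $q'$, so the algorithm examined the candidate edge $qq'$. The edge is Delaunay by hypothesis, so the only possible reason it was not added to $E(G)$ is that it fails validity: it crosses some edge already in $E(G)$, or discretely pierces some $R\in\mathcal{R}$.

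The ``in particular'' clause and the crossing part of $(i)$ follow from a simple observation about where existing edges live. Writing $q'=p_i$, every edge in $E(G)$ at this moment joins two points of $\{p_1,\dots,p_{i-1}\}$, each of which has $x$-coordinate strictly less than $x(q')$ by general position. Every such edge is therefore contained in the open half-plane $\{x<x(q')\}$, whereas $v(qq')$ lies on the vertical line $\ell_{q'}$. Hence $v(qq')$ meets no existing edge, which both establishes the ``in particular'' assertion and forces any crossing of $qq'$ with an existing edge to take place along $h(qq')$, giving case~$(i)$.

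For the piercing alternative, suppose $qq'$ discretely pierces $R\in\mathcal{R}$. A short case analysis on how the L-edge meets $\partial R$ leaves three possibilities: (A) both crossings of $\partial R$ lie on $h(qq')$, so $h(qq')$ alone pierces $R$; (B) both lie on $v(qq')$, so $v(qq')$ alone pierces $R$; or (C) one crossing lies on each arm, forcing the bend $b=(x(q'),y(q))$ to lie in the interior of $R$ with both $q$ and $q'$ outside $R$. Case (C) is to be ruled out via the Delaunay hypothesis. In (C), the component of $R\setminus\{qq'\}$ carved off at the L-corner is the axis-aligned rectangle with opposite corners $b$ and either $(x_-(R),y_+(R))$ or $(x_-(R),y_-(R))$, according to orientation. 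Using $x(q)<x_-(R)<x(q')$ together with either $y(q)<y_+(R)<y(q')$ or $y(q')<y_-(R)<y(q)$, this carved rectangle sits inside $R(qq')$. But $R(qq')$ has no point of $P$ in its interior (Delaunay hypothesis) and, by general position, none on its boundary apart from $q$ and $q'$ themselves; so the carved component contains no point of $P$, contradicting discrete piercing. Therefore (A) or (B) holds, yielding $(i)$ or $(ii)$. The main technical step is this exclusion of the corner-in-$R$ configuration, which is exactly what the Delaunay hypothesis was set up to preclude.
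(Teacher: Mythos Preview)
Your argument is correct and in fact subsumes the paper's: the paper's three-sentence proof only establishes the ``in particular'' clause (existing edges at the time $q'$ is processed lie weakly to the left of $\ell_{q'}$, so $v(qq')$ cannot cross any of them) and leaves the dichotomy $(i)$/$(ii)$ implicit. Your case analysis---especially the exclusion of case~(C), where the bend of the $L$ lies inside $R$, via the Delaunay hypothesis---supplies the missing justification for why piercing must occur on a single arm; the paper only spells this point out later, in the implementation section (``the $L$-shape also cannot (discretely) pierce a rectangle containing the corner of the $L$-shape since then the edge $qp$ would not be a Delaunay edge''). One small wrinkle: edges added during the \emph{same} iteration $i$ may already involve $q'=p_i$, not just $p_1,\dots,p_{i-1}$, but since those edges also have their vertical arm on $\ell_{q'}$ they cannot cross $v(qq')$ either, so your conclusion stands.
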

\begin{proof}
The points in $P$ are processed by Algorithm \ref{alg:prim} in increasing order of 
their $x$-coordinates, and when a point is being processed, 
we add edges of type \{$\upleft[0.2]$, $\leftup[0.2]\}$ to points to its left.
Therefore, while processing $q'$,
no edge from points of $P$ to the right of $q'$ have been added. Hence,
$v(qq')$ does cross an existing edge.
\end{proof}

\begin{restatable}{lemma}{vert}\label{lem:vert}

For $i\in\{-t,\ldots, k\}$, 
let $q\in\Slab(R,p)\cap P$ s.t. $q$ lies above $\pi^u_i$. Let $q_1$ be the
right-neighbor of $q$ on $\pi^u_i$. If $qq_1$ is Delaunay but not valid, then
$v(qq_1)$ pierces a rectangle. In particular, $h(qq_1)$ does not pierce a rectangle
or cross an edge.
Similarly, let $q'\in \Slab(R,p)\cap P$ s.t. 
$q'$ lies below $\pi^{\ell}_i$,
$q'_1$ is the right-neighbor of $q'$ on $\pi^{\ell}_i$.
If $q'q'_1$ is Delaunay but not valid, then 
$v(q'q'_1)$ pierces a rectangle. In particular $h(q'q'_1)$ does not pierce a rectangle
or cross an edge.
\end{restatable}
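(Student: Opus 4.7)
The plan is to invoke Proposition~\ref{prop:rightEdge}: since $qq_1$ is Delaunay but not in $E(G)$, the obstruction is one of $(i)$ $h(qq_1)$ pierces a rectangle, $(ii)$ $h(qq_1)$ crosses an existing edge, or $(iii)$ $v(qq_1)$ pierces a rectangle, and the proposition already rules out $v(qq_1)$ crossing an edge. I would aim to rule out $(i)$ and $(ii)$, forcing $(iii)$ and hence the conclusion. The symmetric statement for $q'$ below $\pi^\ell_i$ follows by reflecting the argument across a horizontal axis.

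I first set up the geometry and identify the key auxiliary edge. By Proposition~\ref{prop:below} we have $y(q)>y(q_1)$, so together with $x(q)<x(q_1)$ the L-shape of $qq_1$ is $\upleft[0.2]$: $h(qq_1)$ lies on $y=y(q)$ over $[x(q),x(q_1)]$ and $v(qq_1)$ lies on $x=x(q_1)$ over $[y(q_1),y(q)]$. Let $r_0$ denote the left-neighbor of $q$ on $\pi^u_i$; since $q$ lies above $\pi^u_i$ both $r_0$ and $q_1$ exist, and by Proposition~\ref{prop:xmonotone} they are consecutive along $\pi^u_i$, so $r_0q_1\in E(G)$ and is a valid Delaunay edge. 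Its horizontal $h(r_0q_1)$ sits on $y=y(r_0)<y(q)$ and spans $[x(r_0),x(q_1)]\supseteq[x(q),x(q_1)]$, and $R(r_0q_1)$ has empty interior. The unifying idea is that any obstruction to $h(qq_1)$ also obstructs $r_0q_1$, violating either its validity or its Delaunay property.

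For $(i)$, suppose $h(qq_1)$ pierces some $R'\in\mathcal{R}$. Then $[x_-(R'),x_+(R')]\subset(x(q),x(q_1))$, $y_-(R')<y(q)<y_+(R')$, and there are witnesses $p_a,p_b\in R'\cap P$ with $y(p_a)>y(q)>y(p_b)$. Delaunayness of $qq_1$ together with $x(p_b)\in(x(q),x(q_1))$ forces $y(p_b)<y(q_1)$, for otherwise $p_b\in R(qq_1)$. Split on $y(r_0)$ versus $y(q_1)$: if $y(r_0)>y(q_1)$ then $y(p_b)<y(q_1)<y(r_0)<y(q)<y_+(R')$, so $h(r_0q_1)$ cuts across $R'$ with $p_a,p_b$ as witnesses, contradicting validity of $r_0q_1$. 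If $y(r_0)<y(q_1)$, then either $y(p_b)>y(r_0)$ and $p_b\in R(r_0q_1)$ contradicts its Delaunay property, or $y(p_b)<y(r_0)$ and $h(r_0q_1)$ again pierces $R'$.

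For $(ii)$, assume an edge $e=ab\in E(G)$ with $x(a)<x(b)$ transversally crosses $h(qq_1)$. The convention that an L-edge's horizontal sits at the $y$-coordinate of its left endpoint, together with general position, rules out any horizontal-horizontal transversal crossing, so the crossing must come from $v(e)$ at $x=x(b)\in(x(q),x(q_1))$ with $y(q)$ strictly between $y(a)$ and $y(b)$. Delaunayness of $qq_1$ forces $x(a)>x(q)$ and, WLOG assuming $y(b)<y(q)<y(a)$, also $y(b)<y(q_1)$ (otherwise $b\in R(qq_1)$). The same case analysis on $y(r_0)$ versus $y(q_1)$ then either puts $b$ inside $R(r_0q_1)$, contradicting its Delaunay property, or makes $v(e)$ transversally cross $h(r_0q_1)$ at $(x(b),y(r_0))$, contradicting the fact that edges of $G$ do not cross. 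The main technical obstacle is organizing this two-axis case analysis (on the orientation of $r_0q_1$ and, in the crossing direction, also of $e$) cleanly, but the guiding picture is uniform: $r_0q_1$ lies just below $h(qq_1)$ and spans a wider $x$-range, so anything obstructing $h(qq_1)$ obstructs $r_0q_1$ as well.
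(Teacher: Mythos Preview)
Your proposal is correct and follows essentially the same route as the paper: introduce the left-neighbor $q_0$ (your $r_0$) of $q$ on $\pi^u_i$, note that $q_0q_1\in E(G)$ is a valid Delaunay edge, and show that any rectangle or edge obstructing $h(qq_1)$ would produce a witness point that either lands in $R(qq_1)$ or $R(q_0q_1)$, or forces a crossing/piercing of $h(q_0q_1)$. One small slip: in case~$(ii)$ the inequality $x(a)>x(q)$ follows from the Delaunay property of $e$ (otherwise $q\in R(ab)$), not from that of $qq_1$.
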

\begin{proof}
We prove the case when $q$ lies above $\pi^u_{i}$. The other case follows
by an analogous argument.
Since $qq_1$ is not valid, either the horizontal
segment of $qq_1$ pierces a rectangle, or crosses an existing edge; or
$v(qq_1)$ pierces a rectangle since by Proposition \ref{prop:rightEdge},
$v(qq_1)$ does not cross an existing edge.

Let $q_0$ be left-adjacent to $q_1$ on $\pi^{u}_i$.
By Proposition \ref{prop:below}, $y(q)>\max\{y(q_0),y(q_1)\}$.
Hence $qq_1$ is of type $\upleft[0.2]$.
Suppose $h(qq_1)$ pierces a rectangle or crosses
an edge of type $\upleft[0.2]$. Then, there is a point $z$ that lies below
the $h(qq_1)$. But, $z$ cannot lie below the $h(q_0q_1)$, 
as that contradicts the fact that $q_0q_1$ is valid.
Hence, $z$ lies above $h(q_0q_1)$. This implies that
$z$ lies either in $R(q_0q_1)$ (if $y(q_0)<y(q_1)$), or $z$ lies in $R(qq_1)$. 
If $z\in R(q_0q_1)$, it contradicts the fact that $q_0q_1$ is Delaunay. Also, $z\not\in R(qq_1)$, as $qq_1$ is Delaunay by assumption.
Therefore, $h(qq_1)$ does not pierce a rectangle,
or crosses an edge of type $\upleft[0.2]$. If 
$h(qq_1)$ crossed an edge $e$ of type $\leftup[0.2]$, then either $qq_1$ is not
Delaunay, violating our assumption, or $e$ is not Delaunay, a contradiction.
\end{proof}

\begin{figure}[ht!]
\begin{center}
\includegraphics[scale=0.67]{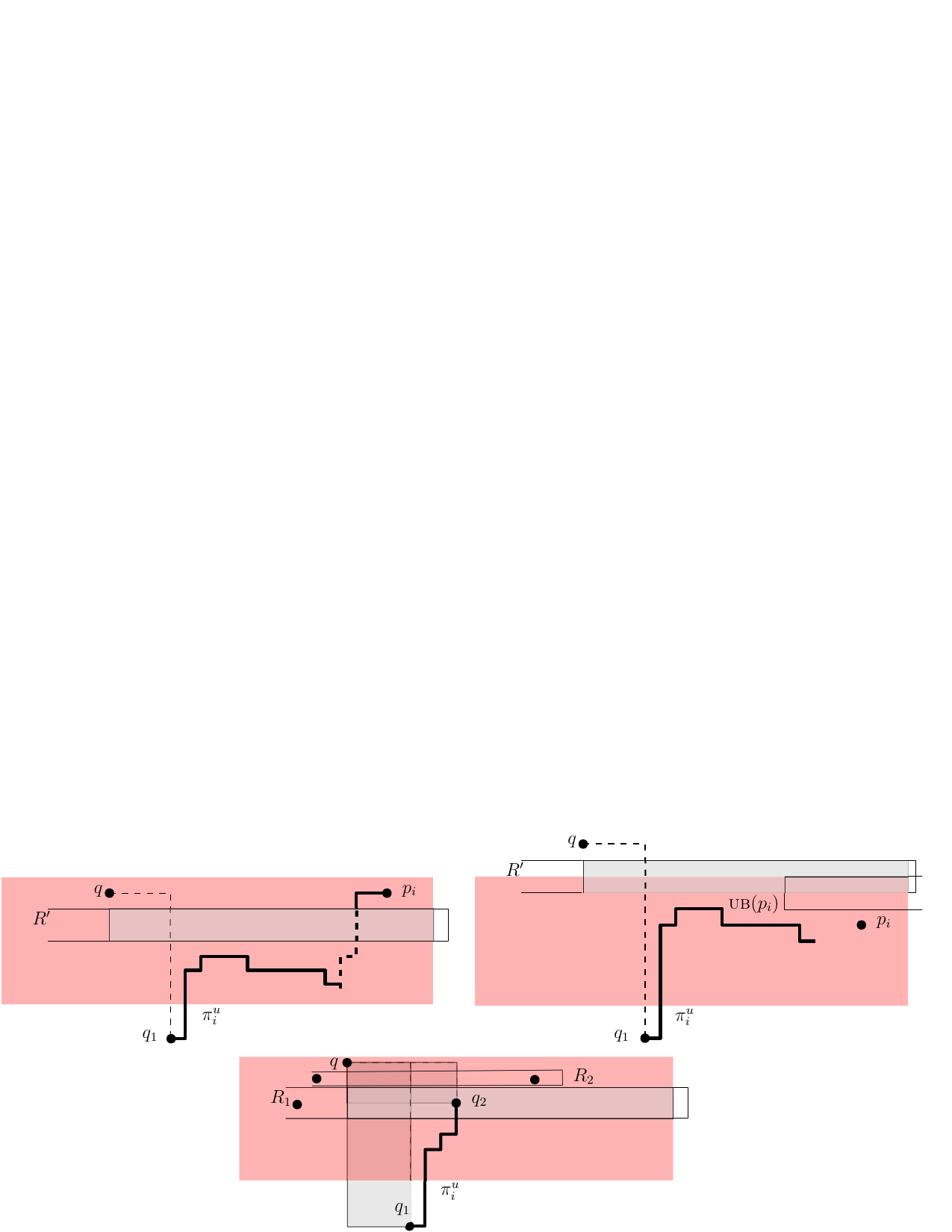}
\caption{Top left: $R'$ cannot be empty if $p_i$ is above $R'$, Top right: If $R'$ does not contain a point of $P$ between $x(q)$ and $x_+(R)$, then
$y(q)>y_+(\Strip_i)$, Bottom: Since $qq_1$ and $qq_2$ are Delaunay, then $R_2$ pierces $v(qq_1)$.}
\label{fig:cases}
\end{center}
\end{figure}


\begin{restatable}{lemma}{corridordeux}\label{lem:corridor2}
For $i\in\{-t,\ldots, k\}$, $P_i\subseteq \corridor_i$, where $P_i = P\cap\Strip_i$.
\end{restatable}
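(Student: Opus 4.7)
The plan is to proceed by contradiction. Suppose $q \in P_i$ lies outside $\corridor_i$; by Proposition~\ref{obs:cor} this means $q$ lies strictly above $\pi^u_i$ or strictly below $\pi^\ell_i$, and the two cases are symmetric, so I assume $q$ is above $\pi^u_i$. Let $r_0, r_1$ denote the left- and right-neighbors of $q$ on $\pi^u_i$. Proposition~\ref{prop:below} gives $y(q) > \max\{y(r_0), y(r_1)\}$. In the construction of $\pi^u_i$, vertex $r_0$ was chosen as the left-adjacent of $r_1$ in $G$ with maximum $y$-coordinate among those lying in $\Slab(R,p)$; since $q \in \Strip_i \subseteq \Slab(R,p)$ and $y(q) > y(r_0)$, the edge $qr_1$ cannot be in $E(G)$, as otherwise $q$ would have been chosen in place of $r_0$. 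Because Algorithm~\ref{alg:prim} considers $qr_1$ when processing $r_1$ (since $x(q) < x(r_1)$), this forces $qr_1$ to be either not Delaunay, or Delaunay but not valid.

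In the first case, some $z \in P$ lies in the interior of $R(qr_1)$, yielding $x(q) < x(z) < x(r_1)$ and $y(r_1) < y(z) < y(q)$. On the $x$-interval $(x(r_0), x(r_1))$ the path $\pi^u_i$ coincides with the horizontal segment of the Delaunay edge $r_0 r_1$ at height $y(r_0)$, regardless of whether $r_0r_1$ has L-orientation $\upleft[0.2]$ or $\leftup[0.2]$. If $y(z) < y(r_0)$, then necessarily $y(r_0) > y(r_1)$, and $z$ lies in the interior of $R(r_0 r_1)$, contradicting the Delaunay property of $r_0 r_1$. Otherwise $y(z) > y(r_0)$, placing $z$ above $\pi^u_i$, inside $\Slab(R,p)$, with $x(z) > x(q)$; I substitute $z$ for $q$, note that $r_0, r_1$ remain its left- and right-neighbors on $\pi^u_i$ (no path vertex lies in $(x(r_0), x(r_1))$), and iterate. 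The iteration terminates since $x$-coordinates strictly increase and are bounded by $x(r_1)$; it ends with an instance of the second case.

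In the second case, Lemma~\ref{lem:vert} yields a rectangle $R' \in \mathcal{R}$ pierced by $v(qr_1)$. Hence $r_1 \in R'$, $y_-(R') < y(r_1)$, $y_+(R') > y(q)$, and $R'$ contains a point of $P$ strictly on each side of $v(qr_1)$. Guided by Figure~\ref{fig:cases}, I split on whether $R'$ contains a point of $P$ with $x$-coordinate in $(x(q), x_+(R))$. If not, the right-side witness of the piercing must lie to the right of $x_+(R)$; combining the non-piercing relationship between $R$ and $R'$, the position of $p_i$ relative to $R'$, and the bound $y_+(\Strip_i) \le y_+(\ub(p_i))$ from Proposition~\ref{prop:strip}, I will show $y(q) > y_+(\Strip_i)$, contradicting $q \in \Strip_i$. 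If such a witness $z' \in R' \cap R$ does exist, the Delaunay property of the edges $qr_1$ and $qz'$ forces a second rectangle $R_2$ to pierce $v(qr_1)$ or $v(qz')$, contradicting the non-piercing structure of $\mathcal{R}$---this is the configuration indicated in the bottom panel of Figure~\ref{fig:cases}.

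The main obstacle is the second case: it requires a delicate interplay between the non-piercing property of rectangles in $\mathcal{R}$, the active-rectangle-based definition of $\Strip_i$, and the structure of $\pi^u_i$. The key tools there are Proposition~\ref{prop:strip}, which bounds $y_+(\Strip_i)$ via $\ub(p_i)$, and Lemma~\ref{lem:vert}, which pins down how a Delaunay-but-invalid edge between a point above $\pi^u_i$ and its right-neighbor on the path can fail.
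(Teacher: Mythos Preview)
Your overall strategy matches the paper's: argue by contradiction, take a point $q$ above $\pi^u_i$, and show that some valid Delaunay edge from $q$ to a vertex of $\pi^u_i$ should exist, contradicting the construction of the upper path. Your iteration step, replacing $q$ by a point $z\in R(qr_1)$ until $qr_1$ is Delaunay, is equivalent to the paper's device of choosing $q$ minimal in the partial order $a\prec a'\Leftrightarrow x(a)>x(a')\wedge y(a)<y(a')$; both guarantee that nothing in $P'_i$ sits strictly below-and-right of $q$. Either formulation is fine.

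The second case, however, has a genuine error in the geometry. If $v(qr_1)$ (the vertical segment at $x(r_1)$, spanning $[y(r_1),y(q)]$) discretely pierces $R'$, then the segment must pass all the way through $R'$, forcing $y(r_1)\le y_-(R')$ and $y_+(R')\le y(q)$, the \emph{opposite} of what you wrote. In particular $r_1\notin R'$. The paper's ``bad/good'' split then asks whether $R'$ has a point of $P$ in the $x$-range $[x(r_1),\min\{x(p),x_+(R')\}]$, not $(x(q),x_+(R))$; the roles of $r_1$ versus $q$ and of $p$ versus $R$ matter here.

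More seriously, your sketch of the ``witness exists'' sub-case does not give the actual contradiction. You claim a second rectangle $R_2$ pierced by $v(qr_1)$ or $v(qz')$ would contradict the non-piercing structure of $\mathcal{R}$, but two rectangles both pierced by the same vertical segment need not pierce each other, so no such violation arises. The paper's argument is different and more delicate: among all rectangles pierced by $v(qr_1)$ it selects $R_1=\arg\max y_-$, takes $q_2$ the leftmost point of $R_1$ to the right of $r_1$, argues that $q_2$ must lie on $\pi^u_i$ (otherwise the $x$-monotone path would pierce $R_1$), and shows $qq_2$ is a valid Delaunay edge --- any rectangle pierced by $v(qq_2)$ would also be pierced by $v(qr_1)$ with a strictly higher $y_-$, contradicting the extremal choice of $R_1$. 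The final contradiction is about the path construction: since $qq_2\in E(G)$, $q\in\Slab(R,p)$, and $y(q)$ exceeds the $y$-coordinate of $q_2$'s actual predecessor on $\pi^u_i$, the construction of $\pi^u_i$ would have selected $q$ instead. This chain (extremal $R_1$, locate $q_2$ on the path, validity of $qq_2$, contradict path maximality) is the substantive content of Case~2 and is missing from your plan.
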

\begin{proof}
Suppose $P_i\setminus\corridor_i\neq\emptyset$. 
By Proposition \ref{obs:cor}, any such point either lies above $\pi^u_i$ or
below $\pi^{\ell}_i$. We assume the former. The latter
follows by an analogous argument.
Let
$P'_i = \{q\in \Slab(R,p): y(q)<y_+(\Strip_i)\mbox{ and } q \mbox{ lies above } \pi^u_i\}
$. It suffices to show that $P'_i=\emptyset$. For the sake of contradiction, suppose
$P'_i\neq\emptyset$.

We impose the following partial order on $P_i$: for $a,a'\in P_i$, $a\prec a'\Leftrightarrow
x(a)>x(a')\,\wedge\,y(a)<y(a')$. Let $q$ be a minimal element in
$P_i'$ according to $\prec$. In the following, when we refer to a minimal element, we impicitly
assume this partial order.

We show that there is a valid Delaunay edge between $q$ and a point $q'$ on $\pi^u_i$.
By assumption, $q$ lies above $\pi^u_i$. Let $q_0$ and $q_1$ denote, respectively,
the left- and right-neighbors of $q$ on $\pi^u_i$.


Since $q$ is minimal in $P'_i$, $qq_1$ is Delaunay. By Proposition \ref{prop:below},
it follows that $y(q)>y(q_0)$ and $y(q)>y(q_1)$. 
Since $q$ is not left-adjacent to $q_1$ on $\pi^u_i$, it implies $qq_1$ is not valid.

Since $qq_1$ is Delaunay but not valid, by Lemma \ref{lem:vert},
$v(qq_1)$ pierces a rectangle. Let $\mathcal{R}'$ denote the set of rectangles pierced by $qq_1$. 
Suppose $\exists R'\in\mathcal{R'}$ s.t. $R'[x(q_1),\min\{x(p), x_+(R')\}]\cap P=\emptyset$.
Then, we call $R'$ a \emph{bad} rectangle. Otherwise, we say that $R'$ is \emph{good}.
Now we split the proof into two cases depending on whether $\mathcal{R}'$ contains a bad
rectangle or not. In the two cases below, we use Proposition \ref{prop:xmonotone} 
that $\pi^u_i$ is $x$-monotone.

\smallskip\noindent
{\bf Case 1. $\mathcal{R'}$ contains a bad rectangle.}
Let $R'\in\mathcal{R}'$ be a bad rectangle. 
First, observe that $x_+(R')>x(p)$ as otherwise, 
$R'$ is not pierced by $v(qq_1)$.
Now, suppose $y(p_i)>y_+(R')$, where $p_i$ is the rightmost point in $\Strip_i$. 
We have that $x(q_1)<x(p_i)$, both $p_i$ and $q_1$ lie on $\pi^u_i$ and, 
$\pi^u_i$ is $x$-monotone. But, this implies $\pi^u_i$ pierces $R'$. But this is impossible as by construction
$\pi^u_i$ consists of valid Delaunay edges. Hence, since $R'$ is bad, we can assume that $y(p_i)<y_-(R')$. 
But the definition of the upper barrier implies that 
$y_+(R')\ge y_+(\ub(p_i))$. Since $v(qq_1)$ pierces
$R'$, it implies $y(q)>y_+(R')$, and hence $y(q)>y_+(\ub(p_i))$.
But, this contradicts the assumption that $q\in\Strip_i$, since
by Proposition \ref{prop:strip}, $y_+(\ub(p_i))\ge y_+(\Strip_i)$
and hence $y(q)>y_+(\Strip_i)$.

\smallskip\noindent
{\bf Case 2. All rectangles in $\mathcal{R}'$ are good.} 
Let $R_1=\arg\max\{y_-(R'):R'\in\mathcal{R}'\}$.
Let $q_2$ be the leftmost point in $R_1$ s.t. $x(q_2)>x(q_1)$.
Since $q_2$ lies to the right, and below
$q$, $q_2\prec q$. Since $q$ is a minimal element in $P'_i$, 
it implies that $q_2$ lies on or below $\pi^u_i$. We claim that
$q_2$ cannot lie below $\pi^u_i$. Suppose it did. Let $q_2'$ be the left-neighbor
of $q_2$ on $\pi^u_i$. Then, $x(q_2')<x(q_2)$. 
Since $q_2$ is the leftmost point of $R_1$ to the
right of $v(qq_1)$, then either $y(q'_2)<y_-(R_1)$, or $y(q'_2)>y_+(R_1)$. However,
in either case, we obtain that $\pi^u_i$ must cross $R_1$ between $q_2$ and $q_1$,
which implies that $\pi^u_i$ pierces $R_1$, as $\pi^u_i$ is $x$-monotone, and
the edges in $\pi^u_i$ are of the form $\{\upleft[0.2],\leftup[0.2]\}$, a contradiction.

Since $q$ is minimal, $qq_2$ is Delaunay.
By Lemma \ref{lem:vert}, the only reason
$qq_2$ is not valid is that $v(qq_2)$ pierces a rectangle. But, any such rectangle
$R_2$
is also pierced by $v(qq_1)$, as $qq_2$ is Delaunay. But, this implies
$y_-(R_2)>y_-(R_1)$, contradicting the choice of $R_1$. Therefore, $qq_2$ is a valid
Delaunay edge. Now, the only reason that $q$ is not the left-adjacent point
of $q_2$ on $\pi^u_i$ then, is that $q'_2$, the left-adjacent point of $q_2$ on $\pi^u_i$
lies in $\Slab(R,p)$, but above $q$, 
i.e., $y(q'_2)>y(q)$, as the construction of $\pi^u_i$
dictates that we choose the left-adjacent point with highest $y$-coordinate
that lies in $\Slab(R,p)$. Showing that this leads to a contradiction completes
the proof.

So suppose $y(q'_2)>y(q)$, then $y(q'_2)>y_+(R_1)$. Further, 
by Proposition \ref{prop:deledges}, $x(q'_2)>x(q)$. Again this implies
the $x$-monotone curve $\pi^u_i$ cannot contain both $q'_2$ and $q_1$
without piercing $R_1$. Hence, $y(q'_2)<y_-(R_1)<y(q)$, but this contradicts
the choice of $q'_2$ as the left-adjacent point of $q_2$ on $\pi^u_i$ since $qq_2$ is a valid Delaunay edge with $y(q'_2)<y(q_2)<y(q)$.
See Figure \ref{fig:cases} for the different cases in this proof.
\end{proof}

The key property of a corridor is that if the upper or lower path of a corridor crosses a rectangle $R'$,
then there must be a point of $R'\cap P$ that lies on that path of the corridor. Using this, we can show
that any point in the strip has an adjacent point to its right in $G$ that lies in the corridor.
This implies that every point in a strip has a path to the rightmost point in the strip
that lies entirely in its corresponding corridor.

\begin{lemma}
\label{lem:corridor3}
For each $q\in \corridor_i$, there is a path $\pi(q,p_i)$ between $q$ and $p_i$ that lies in $\corridor_i$, 
where $p_i\in P_i$ is the rightmost point in $\Strip_i$.
\end{lemma}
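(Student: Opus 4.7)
The plan is to establish the lemma by strong induction on the rank of $q$ in the list of points of $\corridor_i\cap P$ sorted in decreasing order of $x$-coordinate, with $p_i$ (which, since $\pi^u_i$ and $\pi^\ell_i$ emanate leftward from $p_i$, has the largest $x$-coordinate in $\corridor_i\cap P$) as the base case where the trivial empty path suffices. For the inductive step with $q\ne p_i$, it is enough to exhibit a right-adjacent vertex $q'$ of $q$ in $G$ with $q'\in\corridor_i$ and $x(q')>x(q)$: the inductive hypothesis yields a path from $q'$ to $p_i$ in $\corridor_i$, and prepending the L-shaped edge $qq'$ extends it to a path from $q$ to $p_i$. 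That $qq'$ itself stays inside $\corridor_i$ follows from Proposition~\ref{prop:xmonotone} and Proposition~\ref{obs:cor}: the $x$-monotonicity and non-crossing of $\pi^u_i,\pi^\ell_i$ force the axis-parallel rectangle $R(qq')$ to lie between them whenever its two corners $q,q'$ do.

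The easy sub-case is when $q$ lies on $\pi^u_i$ or $\pi^\ell_i$: the neighbor of $q$ along that path to the right of $q$ (which exists because $p_i$ is rightmost on both paths) serves as $q'$, since $\pi^u_i,\pi^\ell_i\subseteq G$ by construction and both paths are traced by Algorithm~\ref{alg:prim}'s Delaunay edges. So, the real obstacle is the sub-case where $q$ is strictly between $\pi^u_i$ and $\pi^\ell_i$, and this is where I expect most of the work to go.

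For the strictly interior case, I would mirror the minimality argument of Lemma~\ref{lem:corridor2}. Suppose for contradiction that some strictly interior point of $\corridor_i\cap P$ has no right-adjacent vertex in $\corridor_i$, and let $q$ be a minimal such point under a partial order analogous to $a\prec a' \Leftrightarrow x(a)<x(a')\wedge y(a)>y(a')$ but oriented to favour extension rightward (with a symmetric case for points below the "middle line"). Using the symmetric analog of Proposition~\ref{prop:deledges} for right-adjacent vertices together with Lemma~\ref{lem:vert}, I would isolate a candidate right-neighbor $q'$ of $q$ lying in $\corridor_i$ and show that any failure of $qq'$ to lie in $E(G)$ would be caused by a rectangle $R'\in\mathcal{R}$ whose piercing is witnessed by $v(qq')$. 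A bad-rectangle/good-rectangle dichotomy, exactly parallel to the one used in the proof of Lemma~\ref{lem:corridor2}, then yields a contradiction: a bad $R'$ collides with the definition of $\Strip_i$ via Proposition~\ref{prop:strip}, while a good $R'$ produces, via Proposition~\ref{prop:deledges} applied to some leftmost point of $R'$ beyond $v(qq')$, a valid Delaunay edge from $q$ into $\corridor_i$, contradicting either the minimality of $q$ or the fact that $\pi^u_i,\pi^\ell_i$ are composed of valid Delaunay edges and are $x$-monotone.
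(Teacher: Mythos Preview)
Your high-level outline---induction on decreasing $x$-coordinate, reducing to the claim that every interior point of $\corridor_i$ has a right-adjacent vertex in $\corridor_i$---matches the paper. The boundary case ($q$ on $\pi^u_i$ or $\pi^\ell_i$) is handled the same way in both.

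The gap is in your plan for the interior case. You propose to ``mirror'' Lemma~\ref{lem:corridor2}, invoking Lemma~\ref{lem:vert} and the bad/good rectangle dichotomy. Neither tool transfers. Lemma~\ref{lem:vert} is stated for a point $q$ lying \emph{above} $\pi^u_i$ (or below $\pi^\ell_i$) and its right-neighbour $q_1$ \emph{on} that boundary path; its proof uses the edge $q_0q_1$ of $\pi^u_i$ to rule out obstructions on $h(qq_1)$. For an interior $q$ there is no such boundary edge sitting just below $h(qq')$, so the conclusion ``only $v(qq')$ can be the obstruction'' is not available. Likewise, the bad-rectangle branch in Lemma~\ref{lem:corridor2} ends by deducing $y(q)>y_+(\ub(p_i))\ge y_+(\Strip_i)$ from the fact that $q$ sits above $\pi^u_i$ and above a rectangle that the path cannot pierce; for an interior $q$ no such height bound follows, and in any case $q\in\corridor_i$ does not imply $q\in\Strip_i$, so ``collides with Proposition~\ref{prop:strip}'' is not a contradiction here.

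The paper's argument for the interior case is genuinely different and is the idea you are missing. Instead of analysing a single candidate edge, it looks at \emph{all} points of $\corridor_i$ to the right of $q$ (up to the first point where $\pi^u_i$ and $\pi^\ell_i$ coincide), takes the staircase $Q^{\min}_i=(q_1,\ldots,q_t)$ of $\prec$-minimal elements, and first shows (Claim~\ref{claim:qqt}) that the \emph{extreme} candidates are half-free: $h(qq_1)$ is unobstructed and $v(qq_t)$ is unobstructed. If $qq_1$ and $qq_t$ are both invalid, the obstruction to $qq_1$ is some $R''$ pierced by $v(qq_1)$ and the obstruction to $qq_t$ is some $R'$ pierced by $h(qq_t)$. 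The key step is then to use the \emph{non-piercing} of the pair $R',R''$ themselves: their relative positions force either $x_+(R'')<x_+(R')$ or $y_+(R')<y_+(R'')$, which carves out a sub-box anchored at $q$ containing some $q_j\in Q^{\min}_i$ with $qq_j$ valid. This two-rectangle interaction has no counterpart in Lemma~\ref{lem:corridor2} and is what you would need to supply.
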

\begin{proof}
If $q$ lies on the upper path $\pi^u_i$ or the lower path $\pi^{\ell}_i$
defining $\corridor_i$, the lemma is immediate. So we can assume by Proposition \ref{obs:cor} that
$q$ lies below $\pi^u_i$, and above $\pi^{\ell}_i$. Suppose the lemma is false. 
Let $q$ be the rightmost point of $\corridor_i$ that does not have a path to $p_i$ lying in $\corridor_i$.
To arrive at a contradiction, it is enough to show that $q$ is adjacent to a point $q'\in \corridor_i$ that
lies to the right of $q$. 

Starting from $\ell_q$, the vertical line through $q$, sweep to the right until the first point $r$ that lies on both $\pi^{u}_i$ and
$\pi^{\ell}_i$. Such a point exists since $p_i$ lies on both $\pi^{u}_i$ and $\pi^{\ell}_i$.
Let $Q'_i$ denote the set of points in $\corridor_i$ whose $x$-coordinates lie between $x(q)$ and $x(r)$.
This set is non-empty as it contains $q$ and $r$. 
Hence, either $Q^+_i = \{q'\in Q'_i: y(q')>y(q)\}\neq\emptyset$, or $Q^-_i=\{q'\in Q'_i:y(q')<y(q)\}\neq\emptyset$.
If both are non-empty, let $Q_i$ denote the set that contains a point with smallest $x$-coordinate. Otherwise,
we let $Q_i$ denote the unique non-empty set. Assume $Q_i=Q^+_i$. An analgous argument holds when $Q_i=Q^-_i$.

Define a partial order on $Q_i$, where for $a,b\in Q_i$, $a\prec b\Leftrightarrow x(a)<x(b) \mbox{ and } y(a)<y(b)$.
Let $Q^{\min}_i=(q_1,\ldots, q_t)$ denote the sequence of minimal elements of $Q_i$ ordered linearly such that $y(q_k)>y(q_j)$ for $k < j$. It follows that $x(q_k)<x(q_j)$ for $k<j$. Observe that $qq_i$ is Delaunay for $i=1,\ldots, t$ by the minimality of $q_i$.
Our goal is to show that $qq_i$ is a valid Delaunay edge for some $i\in\{1,\ldots, t\}$.
We start with the following claim that $v(qq_t)$ and $h(qq_1)$ do not pierce a rectangle in $\mathcal{R}$, 
or cross an edge of $G$. 

\begin{claim}
\label{claim:qqt}
$h(qq_1)$ does not pierce a rectangle in $\mathcal{R}$ or cross an edge of $G$, and
$v(qq_t)$ does not pierce a rectangle in $\mathcal{R}$ or cross an edge of $G$.
\end{claim}
\begin{proof}
Suppose $h(qq_1)$ pierced a rectangle $R'$. Since $\pi^u_i$ consists of valid Delaunay
edges, and the choice of $Q_i$, $R'$ contains a point $a$ that lies in $\corridor_i$. Since
$h(qq_1)$ pierces $R'$, $x(q)<x(a)<x(q_1)$. If $y(a)<y(q_1)$, then it contradicts the fact that
$q_1$ is minimal, and if $y(a)>y(q_1)$, it contradicts the fact that $q_1$ is the minimal
element with highest $y$-coordinate. A similar argument shows that $h(qq_1)$ does not
cross an edge of $G$.

If $v(qq_t)$ pierced a rectangle $R'\in\mathcal{R}$,
then $R'$ has a point to the right of $v(qq_t)$. Further, by Proposition \ref{prop:xmonotone},
$\pi^u_i$ and $\pi^{\ell}_i$ are $x$-monotone paths and by construction, they consist
of valid Delaunay edges
meeting at $r$. 
If $r=q_t$ and $v(qq_t)$ pierced a rectangle, since $qq_t$ is Delaunay, and 
the edges are of type $\{\upleft[0.2],\leftup[0.2]\}$, it implies that
the left-adjacent point of $r$ on $\pi^u_{i}$ or $\pi^{\ell}_i$ is not a valid Delaunay edge. Hence,
we can assume $q_t\neq r$. Again, since the edges of $\pi^{u}_i$ and $\pi^{\ell}_i$ are valid Delaunay
edges, 
it implies that $R'$ has a point $a$ s.t. $x(q_t)<x(a)\le x(r)$, and $a$ lies in $\corridor_i$. 
Since $v(qq_t)$ pierces $R'$,
$y(a)<y(q_t)$. But this contradicts the fact that $q_t$ is the point in $Q^{\min}_i$ with the smallest $y$-coordinate.
Therefore, $v(qq_t)$ can not pierce any rectangle.
Since $qq_t$ is Delaunay, by Proposition \ref{prop:rightEdge}, $v(qq_t)$ does not cross an edge of $G$.
\end{proof}

We now define a point $x_1$ on the $x$-axis and a point $y_1$ on the $y$-axis as follows:
\begin{align*}
x_1 &= \min\left\{\{x_+(R'):R'\mbox{ pierced by } h(qq_t)\},\{v(e): e\mbox{ crosses } h(qq_t)\}\right\}\\
y_1 &= \min\left\{\{y_+(R'):R'\mbox{ pierced by } v(qq_1)\},\{h(e): e\mbox{ crosses } v(qq_1)\}\right\}
\end{align*}
By Claim \ref{claim:qqt} and the assumption that $qq_t$ and $qq_1$ are not valid edges, it follows that
$x_1$ and $y_1$ exist.  
We argue when $x_1$ and $y_1$ correspond to $x_+(R')$ and
$y_+(R'')$, respectively for rectangles $R',R''\in\mathcal{R}$. 
If they were instead defined by the vertical/horizontal side of edges, the arguments are similar.

\begin{figure}[ht!]
\begin{center}
\includegraphics[width=2.5in]{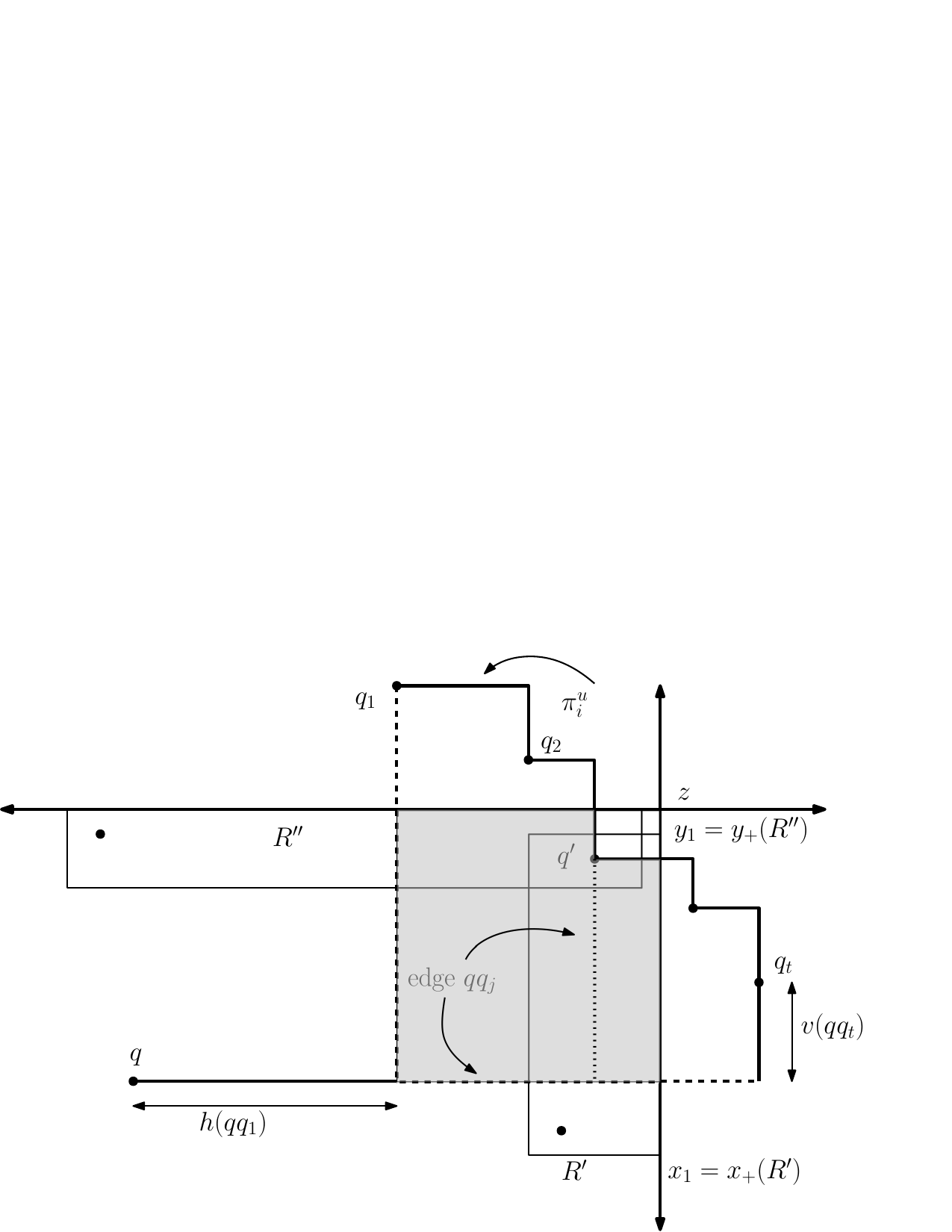}
\caption{The edge $qq'$ is a valid Delaunay edge.}
\label{fig:godeedge}
\end{center}
\end{figure}

Observe that $x_-(R'')<x(q)$, while $x_-(R')>x(q)$, and $y_-(R'')>y(q)$ and $y_-(R')<y(q)$.
Now, from the fact that the rectangles are non-piercing, it implies that 
either $x_+(R'')<x_1$ or $y_+(R')<y_1$.
Suppose wlog, the former is true.
Since $R'$ is pierced by $h(qq_t)$ and $\pi^u_i$ consists of valid Delaunay edges, 
there are points in $R'$ that lie in $\corridor_i$, and these
points lie below $y_1$. 

Let $z$ denote the intersection of the vertical line through $x_1$ and the hoizontal line through
$y_1$. By the argument above, the rectangle with diagonal $qz$ contains points of $P$, and
hence a point $q'\in Q^{\min}_i$. We claim that $qq'$ is a valid Delaunay edge. To see this, 
note that $h(qq')$ does not pierce a rectangle in $\mathcal{R}$ as such a rectangle contradicts
the definition of $x_1$. If $v(qq')$ pierced a rectangle, such a rectangle $\tilde{R}$ must have
$y_+(\tilde{R})<y_1$, as $qq'$ is Delaunay. This contradicts the choice of $y_1$. 
Therefore, $qq'$ is a valid Delaunay edge.
\end{proof}

The lemma below follows the description in the proof strategy at the start of this section.

\begin{restatable}{lemma}{monpathslab}\label{lem:monpathslab}

For a rectangle $R$ and point $p\in R$, after Algorithm~\ref{alg:prim} has processed point $p$,
the points in $\Slab(R,p)$ induce a connected subgraph, all of whose edges lie in $\Slab(R,p)$.
\end{restatable}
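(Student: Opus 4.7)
The plan is to execute the strategy outlined at the beginning of Section~\ref{sec:correctness}, but with corridors in place of strips so that the paths we exhibit remain inside $\Slab(R,p)$. Fix the rectangle $R$ and the point $p\in R\cap P$, and let $\Strip_{-t},\ldots,\Strip_k$ and $\corridor_{-t},\ldots,\corridor_k$ be the strips and corresponding corridors associated with $(R,p)$, with $p_i$ the rightmost point of $P$ inside $\Strip_i$ and in particular $p_0=p$.

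First, I would pick an arbitrary $q\in \Slab(R,p)\cap P$ and locate it in a strip. By condition $(ii)$ of Lemma~\ref{lem:strips}, there exists $i\in\{-t,\ldots,k\}$ with $q\in\Strip_i\cap P=P_i$. Lemma~\ref{lem:corridor2} then places $q$ inside the corridor $\corridor_i$, and Lemma~\ref{lem:corridor3} produces a path in $G$ from $q$ to $p_i$ lying entirely in $\corridor_i$. Since $\corridor_i\subseteq \Slab(R,p)$ by Lemma~\ref{lem:corridor1}, this path, together with all its edges, lies in $\Slab(R,p)$.

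Next, I would stitch such intra-corridor paths together along consecutive strips from $\Strip_i$ back to $\Strip_0$. Assume $i>0$; the case $i<0$ is symmetric. Condition $(iii)$ of Lemma~\ref{lem:strips} provides a point $q'\in P\cap \Strip_i\cap\Strip_{i-1}$. By Lemma~\ref{lem:corridor2}, $q'$ lies in both $\corridor_i$ and $\corridor_{i-1}$, so two applications of Lemma~\ref{lem:corridor3} yield a path from $p_i$ to $q'$ inside $\corridor_i$ and a path from $q'$ to $p_{i-1}$ inside $\corridor_{i-1}$. Iterating this construction for strip indices $i,i-1,\ldots,1$ produces a concatenated walk in $G$ from $q$ through $p_i,p_{i-1},\ldots,p_1$ ending at $p_0=p$. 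Each piece of the walk sits inside some $\corridor_j\subseteq \Slab(R,p)$, so the entire walk, and in particular every edge it uses, lies in $\Slab(R,p)$. As $q$ was arbitrary, every point of $\Slab(R,p)\cap P$ is connected to $p$ through edges lying in $\Slab(R,p)$, which is exactly the claim.

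The main obstacle has already been absorbed into Lemma~\ref{lem:corridor3}: connecting an arbitrary point of a strip to the rightmost point of that strip \emph{without leaving the slab} is precisely what required introducing corridors and the delicate analysis of Delaunay edges and piercing rectangles. Once that lemma is granted, the present proof reduces to a bookkeeping argument that walks the strip index from $i$ toward $0$ and glues the corridor-internal paths together at the shared-point witnesses supplied by condition $(iii)$ of Lemma~\ref{lem:strips}; no further geometric case analysis should be needed.
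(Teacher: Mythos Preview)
Your proposal is correct and follows essentially the same approach as the paper: locate an arbitrary point in a strip, pass to the corridor via Lemma~\ref{lem:corridor2}, connect to the strip's rightmost point via Lemma~\ref{lem:corridor3}, and then chain across adjacent strips using the shared points from condition~$(iii)$ of Lemma~\ref{lem:strips}. The only cosmetic difference is that the paper phrases this as a contradiction argument via an extremal strip index (the strip closest to index~$0$ containing a disconnected point), whereas you iterate explicitly from index~$i$ down to~$0$; the underlying lemmas and logic are identical.
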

\begin{proof}
Let $G[\Slab(R,p)]$ denote the induced subgraph of $G$ on the points in $\Slab(R,p)$.
By Condition (ii) of Lemma \ref{lem:strips}, 
since $\Slab(R,p)\subseteq \cup_{i=-t}^k \Strip_i$, each point in $P\cap\Slab(R,p)$ is contained in $\cup_{i=-t}^k\Strip_i$.
If the statement of the lemma does not hold, consider an extremal strip, i.e., the smallest positive index, or 
largest negative index of a strip such that it contains a point $q$ that does
not lie in the connected component of $G[\Slab(R,p)]$ containing $p$.
Assume without loss of generality that $i \ge 0$. An analogous argument holds
if $i < 0$.
By Lemma \ref{lem:corridor2}, $q\in\corridor_i$, and by Lemma \ref{lem:corridor3}, $q$ has a path $\pi_1$ to $p_i$, the rightmost point in 
$\corridor_i$ that lies entirely in $\corridor_i$. 
By Condition (ii) of Lemma \ref{lem:strips}, there is a point
$q'\in \Strip_i\cap\Strip_{i-1}\cap P$. By Lemma \ref{lem:corridor2},
$q'\in\corridor_i$, and by Lemma \ref{lem:corridor3}, there is a path
$\pi_2$ between $q'$ and $p_i$. Since $q'\in\Strip_{i-1}$, $q'$ lies in the
same connected component as $p$ in $G[\Slab(R,p)]$, and hence there is a path $\pi'$ from $q'$ to $p$ in $G[\Slab(R,p)]$. Concatenating $\pi_1, \pi_2$
and $\pi'$ we obtain a path $\pi$ from $q$ to $p$ that lies in $\Slab(R,p)$.
\end{proof}

We now argue that if $p$ is the rightmost point in a rectangle $R$, then $\Piece(R,p)$ consists of a single slab.

\begin{restatable}{lemma}{lastpt}\label{lem:lastpt}

If $p$ is the last point in $R$ according to the x-coordinates of the points, then $\Piece(R,p)$ consists of a single slab.
\end{restatable}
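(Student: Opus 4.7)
The plan is to show that neither the upper piercing barrier $\upb(R,p)$ nor the lower piercing barrier $\lpb(R,p)$ exists; by the definition of slabs, $\Piece(R,p)$ is then not split by any horizontal line and hence consists of the single slab $\Slab(R,p)=\Piece(R,p)$. By vertical symmetry it suffices to rule out $\upb(R,p)$.

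Suppose for contradiction that some $R'\in\Above(p)$ is discretely pierced by $\Piece(R,p)$. I would first pin down the geometry of this configuration. Since $R'\in\Above(p)$, the rectangle $R'$ is active at $\ell_p$; by general position, no point of $P$ other than $p$ lies on $\ell_p$, and $p\notin R'$, so $\ell_p$ must discretely pierce $R'$, forcing $x_-(R')<x(p)<x_+(R')$. For $\Piece(R,p)$, whose right side sits strictly inside $R'$ horizontally, to pierce $R'$, the intersection $\Piece(R,p)\cap R'$ must span $R'$ from top to bottom while missing both vertical sides of $R'$; this forces $y_-(R)\le y_-(R')$, $y_+(R)\ge y_+(R')$, and $x_-(R)>x_-(R')$. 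I would then invoke non-piercing of $\mathcal{R}$ to obtain $x_+(R)\ge x_+(R')$: if instead $x_+(R)<x_+(R')$, then combined with the three inequalities above, $R$ itself would pierce $R'$, a contradiction.

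Finally, discrete piercing guarantees a point $q\in P$ in the right component $(x(p),x_+(R')]\times[y_-(R'),y_+(R')]$ of $R'\setminus\Piece(R,p)$. The bounds $x(p)<x(q)\le x_+(R')\le x_+(R)$ together with $y_-(R)\le y_-(R')\le y(q)\le y_+(R')\le y_+(R)$ show $q\in R$ while $x(q)>x(p)$, contradicting the assumption that $p$ is the rightmost point of $R\cap P$. The main obstacle is carefully translating ``$\Piece(R,p)$ pierces $R'$'' into the inequalities relating all four sides of $R$ and $R'$, especially noticing that the right side of $\Piece(R,p)$ lies strictly inside $R'$ and therefore the piercing can only happen top-to-bottom; once these inequalities are in hand, non-piercing of $\mathcal{R}$ and the discrete piercing witness deliver the contradiction immediately.
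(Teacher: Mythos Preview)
Your proposal is correct and follows essentially the same argument as the paper. Both proofs assume an upper piercing barrier $R'$ exists, deduce from $R'\in\Above(p)$ that $\ell_p$ passes through $R'$, and then conclude that $\Piece(R,p)$ must pierce $R'$ left--right, which forces the $y$-range of $R$ to contain that of $R'$ and $x_-(R')<x_-(R)$. The only cosmetic difference is the order in which the two hypotheses are invoked: the paper first uses ``$p$ is rightmost in $R$'' to push the right witness $b$ past $x_+(R)$ and then contradicts non-piercing, whereas you first use non-piercing to get $x_+(R')<x_+(R)$ and then contradict ``$p$ is rightmost in $R$'' via the witness $q$. These are mirror images of the same contradiction.
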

\begin{proof}
Assume for the sake of contradiction that $\upb(R,p)$ exists. By definition of $\upb(R,p)$,
there are two points $a,b\in \upb(R,p)$, such that $x(a) < x_-(R) < x(p) < x_+(R) < x(b)$, as $p$ is the last point in $R$.
But this implies $\upb(R,p)$ is pierced by $R$, a contradiction. Therefore, $\upb(R,p)$ does not exist. Similarly, $\lpb(R,p)$ does not exist,
and hence $\Piece(R,p)$ consists of a single slab.
\end{proof}

\begin{theorem}
\label{thm:mainthm}
Algorithm~\ref{alg:prim} constructs a planar support.
\end{theorem}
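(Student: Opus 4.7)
The plan is to deduce the theorem directly from the two structural lemmas already in hand: Lemma \ref{lem:monpathslab} (within each slab $\Slab(R,p)$ the points induce a connected subgraph once $p$ has been processed) and Lemma \ref{lem:lastpt} (when $p$ is the rightmost point of $R$ we have $\Piece(R,p) = \Slab(R,p)$). Planarity will come for free from the construction, and the support property will come from the two lemmas applied at the rightmost point of each rectangle.

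For planarity, I would argue as follows. Algorithm~\ref{alg:prim} only inserts edges that are \emph{valid}, and validity explicitly requires the new edge to cross no previously drawn edge. Hence the L-shaped drawing the algorithm maintains is already plane. To upgrade this to a straight-line embedding, I would use the fact that every inserted edge is Delaunay, i.e.\ its defining rectangle $R(pq)$ is empty of points of $P$. If the straight diagonals of two edges $pq$ and $rs$ crossed, then $R(pq)\cap R(rs)$ would be non-empty and in particular one of these rectangles would have to contain an endpoint of the other edge, contradicting Delaunayness. Hence the straight-line redrawing is also plane.

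For the support property, fix $R \in \mathcal{R}$ and let $p$ be the rightmost point of $R \cap P$ in the sweep order. By Lemma \ref{lem:lastpt}, $\Piece(R,p)$ is a single slab, so $\Slab(R,p) = \Piece(R,p) = R \cap H_-(\ell_p)$. Because $p$ is rightmost, every $q \in R \cap P$ satisfies $x(q) \le x(p)$, which gives $R \cap P \subseteq \Piece(R,p) \cap P = \Slab(R,p) \cap P$; the reverse inclusion is immediate since $\Slab(R,p) \subseteq R$. Thus $R \cap P = \Slab(R,p) \cap P$. By Lemma \ref{lem:monpathslab}, this set induces a connected subgraph of $G$ at the moment $p$ is processed, and the algorithm only adds edges thereafter, so the induced subgraph $G[R\cap P]$ in the final graph is still connected.

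The main obstacle to this argument is not the theorem itself but the lemmas it rests on — once Lemma \ref{lem:monpathslab} is available, stitching things together is short. Within the theorem proof the only subtle points are the bookkeeping identifications $R \cap P = \Slab(R,p) \cap P$ at the rightmost point and the observation that the Delaunay property of the edges is exactly what makes the straight-line redrawing plane; I would write both out explicitly to avoid any ambiguity, but neither requires new geometric insight beyond what has already been developed.
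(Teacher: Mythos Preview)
Your overall approach is the paper's: planarity from the construction, and the support property from Lemmas~\ref{lem:lastpt} and~\ref{lem:monpathslab} applied at the rightmost point of each rectangle. The support half is correct and matches the paper essentially verbatim.

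The planarity half has a gap. Your claim that if the straight diagonals of $R(pq)$ and $R(rs)$ cross then one of the two rectangles must contain an endpoint of the other edge is false. Take $p=(0,0)$, $q=(3,3)$, $r=(1,4)$, $s=(2,-1)$: the diagonals cross at $(3/2,3/2)$, yet $R(pq)=[0,3]\times[0,3]$ contains neither $r$ nor $s$, and $R(rs)=[1,2]\times[-1,4]$ contains neither $p$ nor $q$. So Delaunayness alone does not rule this out. What does rule it out is that in this configuration the two L-shaped edges themselves cross, which validity already forbids. The correct statement---and the one the paper uses---is disjunctive: if two diagonals cross, then either the corresponding L-edges cross, or one of the edges fails to be Delaunay. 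You have already established that the L-edges are pairwise non-crossing; you just need to invoke that fact here as one branch of a case split, rather than try to get planarity from the Delaunay property alone.
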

\begin{proof}
By construction, the edges of the graph $G$ constructed by Algorithm~\ref{alg:prim} are valid Delaunay edges of type $\{\upleft[0.2],\leftup[0.2]\}$. To obtain a plane embedding, we replace each edge $e=\{p,q\}$ by the diagonal of the
rectangle $R(pq)$ joining $p$ and $q$. We call these the \emph{diagonal edges}. It is clear that no diagonal edge
pierces a rectangle. If two diagonal edges cross, then it is easy to check that either the corresponding edges cross, 
or they are not Delaunay. 
For a rectangle $R\in\mathcal{R}$, let $p$ be the last point in $R$.
Lemma~\ref{lem:lastpt} implies that there is only one slab, namely $R$, and 
Lemma~\ref{lem:monpathslab} implies $\textsc{Slab}(R,p)$ is connected. Since $R$ was arbitrary, this implies
Algorithm~\ref{alg:prim} constructs a support. 
\end{proof}




\subsection{Implementation}
\label{sec:implementation}

In this section, we show that Algorithm~\ref{alg:prim} can be implemented to run in $O(n\log^2n + (m+n)\log m)$ time with appropriate data structures,
where $|\mathcal{R}|=m$, and $|P|=n$. 
At any point in time, our data structure maintains a subset of points that lie to the left of the sweep line $\ell$. It also maintains for each rectangle $R$ intersecting $\ell$, the interval $[y_-(R), y_+{R}]$ corresponding to $R$. When the sweep line arrives at the left side of a rectangle, the corresponding interval is inserted into the data structure. The interval is removed from the data structure when the sweep line arrives at the right side of the rectangle. Similarly, whenever we sweep over a point $p$, we insert it into the data structure. In addition, we do the following when the sweep line arrives at a point $p$:
\begin{enumerate}[1.]
    \item \label{step1} Find the upper and lower barriers at $p$.

    \item \label{step2}  Query the data structure to find the set $Q$ of points $q$ which $i)$ lie to the left of $p$ and between the upper and lower barriers at $p$ (orthogonal range query) so that $ii)$ $qp$ is a Delaunay edge. 
    
    \item \label{step3} We add the edge $qp$ for every $q \in Q$ to our planar support. For each edge we add, we remove the points in the data structure that are {\em occluded} by the edges. These are the points whose $y$-coordinates lie in the range corresponding to the vertical side of the $L$-shape for $qp$. 
\end{enumerate}

Our data structure is implemented by combining three different existing data structures. For Step~\ref{step1}, we use a balanced binary search tree $\cT^u_1$ augmented so that it can answer range minima or maxima queries. For any rectangle $R$ intersecting the sweep line $\ell$,  let $(y_1, y_2)$ denote the interval corresponding to the projection of $R$ on the $y$-axis. $\cT_1^u$ stores the key-value pair $(y_1, y_2)$ with $y_1$ as the key and $y_2$ as the value. To find the upper barrier at a point $p = (x,y)$ we need to find the smallest value associated with keys that are at least $x$. If we augment a standard balanced binary search tree so that at each node we also maintain the smallest value associated with the keys in the subtree rooted at that node, such a query takes $O(\log m)$ time. An analogous search tree $\cT^b_1$ is used to find the lower barrier at any point.

To implement Step~\ref{step2}, we use a dynamic data structure $\cT^b_2$ due to Brodal~\cite{brodal2011dynamic} which maintains a subset of the points to the left of $\ell$ and 
can report points in any query rectangle $Q$ that are not dominated by any of the other points in time $O(\log^2 n + k)$ where $k$ is the number of reported points. We say that a point $u$ is dominated by a point $v$ if both $x$ and $y$ coordinates of $u$ are smaller than those of $v$. 
The data structure also supports insertions or deletions of points in $O(\log^2 n)$ time.
When the sweep line arrives at a point $p$, we can use 
$\cT^b_2$ to find all points $q$ that lie to the left of $p$ and below $p$ so that the edge $qp$ is a Delaunay edge (as $qp$ of shape $\leftup[0.2]$ is Delaunay iff there is no other point in the range below and to the left of $p$ that dominates $q$). An analogous data structure $\cT^u_2$ is used to find the points $q$ which lie above and to the left of $p$ so that $qp$ (of shape $\upleft[0.2]$) is Delaunay. 

To implement Step~\ref{step3}, we use a dynamic 1D range search data structure $\mathcal{T}_3$ which also stores a subset of the points to the left of $\ell$, supports insertions and deletions in $O(\log n)$ time and can report in $O(\log n + k)$ time the subset of stored points that lie in a given range of $y$-coordinates (corresponding the vertical side of each added edge),  where $k$ is the number of points reported.
The points identified are removed from $\cT_2^u, \cT_2^b$ and $\mathcal{T}_3$.

By the correctness of Algorithm~\ref{alg:prim} proved in Section \ref{sec:correctness}, 
at any point in time, the current graph is a support for the set of rectangles that lie completely to the left of the sweep line. Thus, if the sweep line $\ell$ is currently at a point $p$ and $q$ is a point to the left of $\ell$, the only rectangles that $qp$ may discretely pierce are those that intersect $\ell$.
A simple but important observation is that if $qp$ is Delaunay then $qp$ pierces a rectangle iff the vertical portion of $L$-shape forming the edge $pq$ pierces the rectangle. To see this note that the horizontal portion of the $L$-shape cannot pierce any rectangle since such a rectangle would not intersect $\ell$. The $L$-shape also cannot (discretely) pierce a rectangle containing the corner of the $L$-shape since then the edge $qp$ would not be a Delaunay edge. Thus, in order to avoid edges that pierce other rectangles, it suffices to restrict $q$ to lie between the upper and lower barriers at $p$.  Thus Step~\ref{step1} above ensures that edges found in Step~\ref{step2} don't pierce any of the rectangles. Similarly, 
Step~\ref{step3} ensures that the edges we add in Step~\ref{step2} don't intersect previously added edges.

The overall time taken by the data structures used by Step~\ref{step1} is $O( (m+n) \log m)$ since it takes $O(\log m)$ time to insert or delete the key-value pair corresponding to any of the $m$ rectangles, and it takes $O(\log m)$ time to query the data structure for the upper and lower barriers at any of the $n$ points. The overall time taken by the data structure in Step $2$ is $O(n \log^2 n)$ since there are most $O(n)$ insert, delete, and query operations, and the total number of points reported in all the queries together is $O(n)$. The overall time taken by the data structure in Step~\ref{step3} is $O(n \log n)$ we only add $O(n)$ edges in the algorithm and the query corresponding to each edge takes $O(\log n)$ time. Each of the reported points is removed from the data structure but since each point is removed only once, the overall time for such removals is also $O(n\log n)$. The overall running time of our algorithm is therefore $O(n\log^2 n + (m+n) \log m)$.

\section{General families of rectangles}
\label{sec:Application}
In this section, we construct an embedded (not-necessarily planar) support graph $G$ for an arbitrary family 
$\mathcal{R}$ of non-piercing rectangles using the results in the previous section.
If $\mathcal{R}$ can be partitioned into $t$ families of non-piercing rectangles, then $G$ is the
union of $t$ plane graphs.

We show that there is a polynomial time algorithm to partition a family $\mathcal{R}$ of axis-parallel rectangles
into a minimum number of parts, each of which is a collection of non-piercing rectangles.

To that end, we construct the \emph{piercing} graph $H=(\mathcal{R},E)$ on the rectangles, where for each $R_1, R_2\in\mathcal{R}$, $\{R_1, R_2\}\in E$ if
$R_1$ and $R_2$ (geometrically) pierce, i.e., $R_1$ and $R_2$ intersect, but neither contains
a corner of the other. 
We claim that $H$ is a \emph{comparability graph}. A graph on vertex set $V$ is a comparability graph if
$H$ can be obtained from the transitive closure of a partial order on $V$ by forgetting directions.
Comparability graphs are perfect, and it is known that they can be colored optimally in polynomial time \cite{GOLUMBIC1980105}. 
In our context, this means that we assign colors to the rectangles so that
piercing rectangles receive distinct colors.

\begin{theorem}
\label{thm:comparability}
For a set $\mathcal{R}$ of axis-parallel rectangles in the plane, the piercing graph $H$ is a comparability graph.
\end{theorem}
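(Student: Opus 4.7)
The plan is to exhibit a transitive orientation of $H$, which by definition makes $H$ a comparability graph. I will first characterize the piercing configuration geometrically, then use the projection containments to define the partial order and verify transitivity.

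\textbf{Step 1 (characterization of piercing).} I would show that if $R_1,R_2\in\mathcal{R}$ pierce each other, then one of their projections onto the $x$-axis strictly contains the other, while the reverse containment holds for their $y$-projections. Write $R_i=[a_i,b_i]\times[c_i,d_i]$. A case analysis on the relative positions of the endpoints of the $x$-projections shows that if neither $x$-projection contains the other, then the overlap region has a corner of one rectangle inside the other (for any relative position of the $y$-projections), contradicting piercing. Symmetrically, the $x$-projections cannot coincide in an endpoint by general position, so exactly one strictly contains the other, say $[a_2,b_2]\subsetneq[a_1,b_1]$. Then for no corner of $R_2$ to be in $R_1$, we must have $c_2<c_1<d_1<d_2$, giving the reverse containment on the $y$-projections. (The general-position assumption of Section~\ref{sec:prelim} rules out degeneracies.)

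\textbf{Step 2 (defining the orientation).} Using Step~1, define a relation $\prec$ on $\mathcal{R}$ by $R\prec R'$ iff $R,R'$ pierce and the $x$-projection of $R$ strictly contains that of $R'$ (equivalently, the $y$-projection of $R'$ strictly contains that of $R$). By Step~1 every edge of $H$ is oriented in exactly one direction; antisymmetry is immediate from the strict containment.

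\textbf{Step 3 (transitivity, the main step).} Suppose $R_1\prec R_2$ and $R_2\prec R_3$. Chaining the $x$-projection containments yields $[a_3,b_3]\subsetneq[a_2,b_2]\subsetneq[a_1,b_1]$, and chaining the $y$-projection containments in reverse yields $[c_1,d_1]\subsetneq[c_2,d_2]\subsetneq[c_3,d_3]$. In particular $R_1$ and $R_3$ have overlapping $x$- and $y$-projections, so they intersect. Moreover, since $R_1$'s $x$-projection strictly contains $R_3$'s, and $R_3$'s $y$-projection strictly contains $R_1$'s, the configuration is precisely the ``cross'' shape of Step~1, so neither contains a corner of the other. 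Hence $R_1$ and $R_3$ pierce, and by definition $R_1\prec R_3$. This proves $\prec$ is a strict partial order whose comparability graph is exactly $H$.

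\textbf{Anticipated obstacle.} The only delicate point is the geometric characterization in Step~1; once we know piercing forces a strict cross configuration, transitivity follows by just composing the strict inclusions on the two axes. After that, the theorem is immediate from the standard definition of a comparability graph, and in combination with the polynomial-time coloring of comparability graphs~\cite{GOLUMBIC1980105} one obtains the promised partition of $\mathcal{R}$ into non-piercing subfamilies.
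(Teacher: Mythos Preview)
Your proposal is correct and follows essentially the same approach as the paper: both characterize piercing as the ``cross'' configuration where one rectangle's $x$-projection contains the other's while the $y$-containment is reversed, and then use this to orient the edges of $H$. The paper packages your Step~3 as the one-line observation that the relation is the intersection of the $x$-containment order with the reverse $y$-containment order, and the intersection of two partial orders is again a partial order; your explicit chaining of the strict inclusions is exactly what that observation unpacks to.
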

\begin{proof}
For a rectangle $R\in\mathcal{R}$, let $x(R)$ and $y(R)$ denote respectively, the projections of $R$ on the $x$ and $y$ axes.
Two rectangles $R_1$ and $R_2$ pierce if and only if $x(R_1)\subseteq x(R_2)$, and $y(R_2)\subseteq y(R_1)$ or vice versa. Consider the containment
order $\prec_x$, where $R_1\prec_x R_2$ if and only if $x(R_1)\subseteq x(R_2)$. Similarly, consider the containment order
$\prec_y$, where $R_1 \prec_y R_2$ if and only if $y(R_2)\subseteq y(R_1)$ (note that the order is reversed). That is, the partial order on the rectangles induced
by their $y$-projection is the reverse of the containment order. Now, $R_1$ and $R_2$ are adjacent in $G$ if and only if 
$R_1\preceq_x R_2$ and $R_1\preceq_y R_2$. Since the intersection of two partial orders is a partial order, we obtain
a partial order $\prec_P$ where $R_1\prec_P R_2$ iff $R_1$ and $R_2$ pierce. We obtain $H$ from $\prec_P$ by completing
the transitive closure of $\prec_P$ and forgetting the directions of the edges.
\end{proof}

From Theorem~\ref{thm:comparability}, $H$ is a comparability graph.
Since comparability graphs are perfect \cite{GOLUMBIC198051,GOLUMBIC1980105}, the chromatic number and
a coloring with the fewest number of colors can be computed in polynomial time.

We now use the fact that there is a planar support for the intersection graph of axis-parallel rectangles
to obtain an algorithm with small approximation factor for the \emph{Hitting Set} problem for
any family of axis-parallel rectangles whose piercing graph has small chromatic number.
In our context, the Hitting Set problem has input a set $P$ of points and a set $\mathcal{R}$ of
axis-parallel rectangles. The objective is to select a smallest subset $P'\subseteq P$ s.t. $P'\cap R\neq\emptyset$
for any $R\in\mathcal{R}$. Note that for this result, we only require the existence of a planar support for 
points with respect to non-piercing rectangles, and we do not need to explicitly construct them. Such a result was already
known, and was proved by \cite{PyrgaR08}, which was used by Musafa and Ray \cite{mustafa2010improved}
to show that a simple \emph{local search} algorithm yields a PTAS for a more general family of \emph{pseudodisks}. 
Our contribution then, 
is only to use the observation in Theorem \ref{thm:comparability} to show that the piercing graph is a comparability graph,
and if it has small chromatic number, this implies an approximation algorithm with small factor for the Hitting Set problem.

\begin{theorem}
\label{thm:param}
Let $\mathcal{R}$ be a set of axis-parallel rectangles in the plane
such that the chromatic number $\chi(H)$ of the piercing graph $H$ is at most $k$.
Then, for any $\epsilon > 0$, there is a $(k+\epsilon)$-approximation algorithm for the 
Hitting Set problem on $(\mathcal{R}, P)$.
\end{theorem}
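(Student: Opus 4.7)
The plan is to reduce the problem to $k$ instances of Hitting Set on non-piercing rectangles, solve each approximately using a known PTAS, and combine the solutions.

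First, I would invoke Theorem~\ref{thm:comparability} together with the fact that comparability graphs are perfect and optimally colorable in polynomial time~\cite{GOLUMBIC1980105}. Since $\chi(H)\le k$, this yields in polynomial time a partition $\mathcal{R}=\mathcal{R}_1\sqcup\cdots\sqcup\mathcal{R}_k$ where each color class $\mathcal{R}_i$ is an independent set in $H$, i.e., no two rectangles in $\mathcal{R}_i$ pierce each other. Hence each $\mathcal{R}_i$ is a non-piercing family of axis-parallel rectangles, exactly the setting handled by our algorithm (and a fortiori by the local-search PTAS of Mustafa and Ray~\cite{mustafa2010improved} for hitting sets defined by points and non-piercing regions, whose analysis requires only the existence of a planar support, as established in \cite{PyrgaR08} and reproved constructively here).

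Next, set $\delta = \epsilon/k$. For each color class $i\in\{1,\ldots,k\}$, run the $(1+\delta)$-approximation PTAS on the instance $(P,\mathcal{R}_i)$ to obtain a hitting set $P_i\subseteq P$ for $\mathcal{R}_i$. Output $P^\star = P_1\cup\cdots\cup P_k$. Clearly $P^\star$ hits every rectangle in $\mathcal{R}$, since every $R\in\mathcal{R}$ lies in some $\mathcal{R}_i$ and $P_i$ hits it. For the approximation factor, let $\OPT$ denote an optimal hitting set for $(P,\mathcal{R})$, and let $\OPT_i$ denote an optimal hitting set for $(P,\mathcal{R}_i)$. Since $\OPT$ also hits $\mathcal{R}_i\subseteq\mathcal{R}$, we have $|\OPT_i|\le |\OPT|$. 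Therefore
\[
|P^\star|\;\le\;\sum_{i=1}^{k}|P_i|\;\le\;\sum_{i=1}^{k}(1+\delta)|\OPT_i|\;\le\;k(1+\epsilon/k)|\OPT|\;=\;(k+\epsilon)|\OPT|,
\]
as required. The overall running time is polynomial because the coloring step runs in polynomial time on comparability graphs and the PTAS is invoked $k$ times with accuracy parameter $\delta=\epsilon/k$.

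There is really no single ``hard step'' here: the work was done in Theorem~\ref{thm:comparability} (perfectness of the piercing graph) and in the existence result for planar supports (which underlies the PTAS). The only mild care required is the choice $\delta=\epsilon/k$ so that the $k$ color classes compose to give an overall factor of $k+\epsilon$ rather than $k(1+\epsilon)$; one could alternatively state the result as a $k(1+\epsilon)$-approximation, which is equivalent up to rescaling $\epsilon$.
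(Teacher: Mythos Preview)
Your proof is correct and follows essentially the same approach as the paper: partition $\mathcal{R}$ into $k$ non-piercing color classes via Theorem~\ref{thm:comparability} and perfectness, run the Mustafa--Ray PTAS with parameter $\epsilon/k$ on each class, and take the union. Your write-up is in fact slightly more explicit than the paper's, since you spell out the inequality $|\OPT_i|\le|\OPT|$ that the paper uses implicitly in its final bound.
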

\begin{proof}
The piercing graph $H$ of $\mathcal{R}$ can be computed in polynomial time. By Theorem~\ref{thm:comparability}, $H$ is a comparability graph. Since $\chi(H)\le k$ and a coloring $H$ with at most $k$ colors can be computed in polynomial
time \cite{GOLUMBIC1980105}, we can partition $\mathcal{R}$ into at most $k$ families
$\mathcal{R}_1,\ldots, \mathcal{R}_k$ s.t. each $\mathcal{R}_i$
consists of non-piercing rectangles. 

To compute the hitting set for rectangles $\mathcal{R}$ and points $P$, we apply a PTAS for the Hitting Set Problem \cite{mustafa2010improved} with rectangles $\mathcal{R}_i$ and point set $P$. We return the
union of the solutions for the $k$ Hitting Set problems, whose size we denote by $S$. 
Let \OPT{} denote the optimal size of a hitting set for the input rectangles with points $P$.
Let $\OPT{}_i$ and $S_i$ denote respectively, the size of an optimal hitting set for the rectangles in $\mathcal{R}_i$, and the size of a hitting set 
returned by the PTAS for $\mathcal{R}_i$. Consider $\epsilon'=\epsilon/k$, where each PTAS returns an $\epsilon$-approximate solution. Then,%
\begin{align*}
S &\le S_1 + S_2 + \ldots + S_k\\
  &\le \left(1+\epsilon'\right)\OPT{}_1 + \left(1+\epsilon'\right)\OPT{}_2 + \ldots + \left(1+\epsilon'\right)\OPT{}_k\\
  &\le \left(k+\epsilon \right) \OPT{}.%
\end{align*}%
\end{proof}%
\begin{theorem}
Let $\mathcal{R}$ be a collection of axis-parallel rectangles, and $P$ be a set of points 
such that the piercing graph has chromatic number $k$. Then, there is a collection of $k$ planar graphs
whose union is a support for the hypergraph $(P,\mathcal{R})$.
\end{theorem}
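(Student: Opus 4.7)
The plan is to combine the coloring guarantee for the piercing graph from Theorem~\ref{thm:comparability} with the construction of a planar support for non-piercing rectangles from Theorem~\ref{thm:mainthm}. Since the piercing graph $H$ of $\mathcal{R}$ is a comparability graph (Theorem~\ref{thm:comparability}), and comparability graphs are perfect, we can compute a proper coloring of $H$ with exactly $\chi(H) = k$ colors. The color classes $\mathcal{R}_1, \ldots, \mathcal{R}_k$ form a partition of $\mathcal{R}$ into $k$ subfamilies, and each $\mathcal{R}_i$ is an independent set in $H$, which by definition of $H$ means that no two rectangles in $\mathcal{R}_i$ pierce each other. Hence each $\mathcal{R}_i$ is a non-piercing family.

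Next, I would apply Algorithm~\ref{alg:prim} (whose correctness is given by Theorem~\ref{thm:mainthm}) separately to each instance $(P,\mathcal{R}_i)$, obtaining a planar graph $G_i = (P, E_i)$ which is a support for the hypergraph $(P,\mathcal{R}_i)$. Let $G = (P, E_1 \cup \cdots \cup E_k)$ be the union of these $k$ planar graphs on the common vertex set $P$.

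It remains to verify that $G$ is a support for $(P, \mathcal{R})$. Fix an arbitrary $R \in \mathcal{R}$. By construction $R \in \mathcal{R}_i$ for some $i \in \{1,\ldots,k\}$, so the induced subgraph $G_i[R \cap P]$ is connected. Since $E_i \subseteq E(G)$ and both graphs share the same vertex set $P$, the induced subgraph $G[R\cap P]$ contains $G_i[R\cap P]$ as a spanning subgraph on $R \cap P$, and therefore $G[R\cap P]$ is also connected. Since $R$ was arbitrary, $G$ is a support for the hypergraph $(P,\mathcal{R})$, expressed as the union of $k$ planar graphs, as claimed.

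There is essentially no obstacle here: the proof is a straightforward assembly of the two earlier results. The only mild subtlety to flag is that the planar supports $G_1, \ldots, G_k$ share the vertex set $P$ but in general their edge sets, and even their plane embeddings produced by Algorithm~\ref{alg:prim}, cannot be simultaneously realized planarly, which is why the statement only asserts that the union is covered by $k$ planar graphs rather than being planar itself.
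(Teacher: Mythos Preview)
Your proof is correct and follows essentially the same approach as the paper: partition $\mathcal{R}$ into $k$ non-piercing color classes via Theorem~\ref{thm:comparability}, apply Theorem~\ref{thm:mainthm} to each class, and take the union. Your write-up is in fact slightly more careful than the paper's, since you explicitly verify that the union is a support and note why the union need not itself be planar.
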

\begin{proof}
The piercing graph of $\mathcal{R}$ is a comparability graph by Theorem \ref{thm:comparability}. Since we can compute
the chromatic number and a coloring with the smallest number of colors in polynomial time for such graph classes \cite{GOLUMBIC1980105}, we can partition $\mathcal{R}$ into $k$ color classes $\mathcal{R}_1, \ldots, \mathcal{R}_k$
such that each color class consists of a set of non-piercing rectangles. 
By Theorem~\ref{thm:mainthm}, for each color class $\mathcal{R}_i$ and points contained in the rectangles in $\mathcal{R}_i$,
we can construct a planar support. The union of the support graphs yields a support for $\mathcal{R}$ as a union of
$k$ planar graphs.
\end{proof}

\section{Conclusion}
\label{sec:conclusion}
We presented a fast algorithm to compute a plane support graph for non-piercing axis-parallel rectangles that runs in $O(n\log^2 n + (n+m)\log m)$. Although the algorithm itself is simple, the fast implementation relies on sophisticated orthogonal range searching data structures of Brodal et al., \cite{brodal2011dynamic}, and the proof of correctness itself turned out to be surprisingly complex. Besides the natural question of whether we can improve the running time of our algorithm to $O((n+m)\log (m+n))$, there are several open questions. Here, we suggest a few such questions motivated by the work of 
Raman and Ray \cite{RR18} who obtained polynomial time algorithms to construct \emph{dual} and
\emph{intersection} supports for non-piercing regions. In our context, these questions translate into the following: $(i)$ Given a set of points $P$ and axis-parallel non-piercing rectangles $\mathcal{R}$ in the plane, obtain a fast algorithm to construct a plane support graph $Q^*=(\mathcal{R},E)$ on $\mathcal{R}$ such that for each point $p\in P$, the rectangles containing $p$ induce a connected subgraph in $Q^*$, and $(ii)$ Given two sets $\mathcal{R}$, $\mathcal{S}$, each of which is a collection of non-piercing rectangles, obtain a fast algorithm to construct a plane support graph $\tilde{Q}$ on $\mathcal{R}$ such that for each
$S\in\mathcal{S}$, the induced graph of $\tilde{Q}$ on 
$\{R\in\mathcal{R}: R\cap S\neq\emptyset\}$ is connected. Another interesting problem is to obtain fast algorithms when, instead of non-piercing axis-parallel rectangles, we have non-piercing convex sets given by either a membership, or separation oracle. In this case, it is relatively easy to show that a maximal set of edges (straight segment joining a pair of points) which do not discretely pierce any of the convex sets and are pairwise non-crossing, form a planar support. This yields a polynomial time algorithm but it is not clear whether a faster algorithm exists.

\bibliography{ref}

\begin{thebibliography}{10}

\bibitem{anceaume2006semantic}
Emmanuelle Anceaume, Maria Gradinariu, Ajoy~Kumar Datta, Gwendal Simon, and
  Antonino Virgillito.
\newblock A semantic overlay for self-peer-to-peer publish/subscribe.
\newblock In {\em 26th IEEE International Conference on Distributed Computing
  Systems (ICDCS'06)}, pages 22--22. IEEE, 2006.

\bibitem{Mustafa17}
Daniel Antunes, Claire Mathieu, and Nabil~H. Mustafa.
\newblock Combinatorics of local search: An optimal 4-local {H}all's theorem
  for planar graphs.
\newblock In {\em 25th Annual European Symposium on Algorithms, {ESA} 2017,
  September 4-6, 2017, Vienna, Austria}, pages 8:1--8:13, 2017.

\bibitem{baldoni2007tera}
Roberto Baldoni, Roberto Beraldi, Vivien Quema, Leonardo Querzoni, and Sara
  Tucci-Piergiovanni.
\newblock Tera: topic-based event routing for peer-to-peer architectures.
\newblock In {\em Proceedings of the 2007 inaugural international conference on
  Distributed event-based systems}, pages 2--13, 2007.

\bibitem{baldoni2007efficient}
Roberto Baldoni, Roberto Beraldi, Leonardo Querzoni, and Antonino Virgillito.
\newblock Efficient publish/subscribe through a self-organizing broker overlay
  and its application to {SIENA}.
\newblock {\em The Computer Journal}, 50(4):444--459, 2007.

\bibitem{BasuRoy2018}
Aniket Basu~Roy, Sathish Govindarajan, Rajiv Raman, and Saurabh Ray.
\newblock Packing and covering with non-piercing regions.
\newblock {\em Discrete {\&} Computational Geometry}, 2018.

\bibitem{bereg2015colored}
Sergey Bereg, Krzysztof Fleszar, Philipp Kindermann, Sergey Pupyrev, Joachim
  Spoerhase, and Alexander Wolff.
\newblock Colored non-crossing {E}uclidean {S}teiner forest.
\newblock In {\em Algorithms and Computation: 26th International Symposium,
  ISAAC 2015, Nagoya, Japan, December 9-11, 2015, Proceedings}, pages 429--441.
  Springer, 2015.

\bibitem{bereg2011red}
Sergey Bereg, Minghui Jiang, Boting Yang, and Binhai Zhu.
\newblock On the red/blue spanning tree problem.
\newblock {\em Theoretical computer science}, 412(23):2459--2467, 2011.

\bibitem{brandes2010blocks}
Ulrik Brandes, Sabine Cornelsen, Barbara Pampel, and Arnaud Sallaberry.
\newblock Blocks of hypergraphs: applied to hypergraphs and outerplanarity.
\newblock In {\em Combinatorial Algorithms: 21st International Workshop, IWOCA
  2010, London, UK, July 26-28, 2010, Revised Selected Papers 21}, pages
  201--211. Springer, 2011.

\bibitem{brandes2012path}
Ulrik Brandes, Sabine Cornelsen, Barbara Pampel, and Arnaud Sallaberry.
\newblock Path-based supports for hypergraphs.
\newblock {\em Journal of Discrete Algorithms}, 14:248--261, 2012.

\bibitem{brodal2011dynamic}
Gerth~St{\o}lting Brodal and Konstantinos Tsakalidis.
\newblock Dynamic planar range maxima queries.
\newblock In {\em International Colloquium on Automata, Languages, and
  Programming}, pages 256--267. Springer, 2011.

\bibitem{buchin2011planar}
Kevin Buchin, Marc~J van Kreveld, Henk Meijer, Bettina Speckmann, and KAB
  Verbeek.
\newblock On planar supports for hypergraphs.
\newblock {\em Journal of Graph Algorithms and Applications}, 15(4):533--549,
  2011.

\bibitem{ChanH12}
Timothy~M. Chan and Sariel Har{-}Peled.
\newblock Approximation algorithms for maximum independent set of pseudo-disks.
\newblock {\em Discret. Comput. Geom.}, 48(2):373--392, 2012.
\newblock \href {https://doi.org/10.1007/S00454-012-9417-5}
  {\path{doi:10.1007/S00454-012-9417-5}}.

\bibitem{chand2005semantic}
Rapha{\"e}l Chand and Pascal Felber.
\newblock Semantic peer-to-peer overlays for publish/subscribe networks.
\newblock In {\em Euro-Par 2005 Parallel Processing: 11th International
  Euro-Par Conference, Lisbon, Portugal, August 30-September 2, 2005.
  Proceedings 11}, pages 1194--1204. Springer, 2005.

\bibitem{chockler2007constructing}
Gregory Chockler, Roie Melamed, Yoav Tock, and Roman Vitenberg.
\newblock Constructing scalable overlays for pub-sub with many topics.
\newblock In {\em Proceedings of the twenty-sixth annual ACM symposium on
  Principles of distributed computing}, pages 109--118, 2007.

\bibitem{Cohen-AddadM15}
Vincent Cohen-Addad and Claire Mathieu.
\newblock Effectiveness of local search for geometric optimization.
\newblock In {\em Proceedings of the Thirty-first International Symposium on
  Computational Geometry}, SoCG '15, pages 329--343, Dagstuhl, Germany, 2015.
  Schloss Dagstuhl--Leibniz-Zentrum fuer Informatik.

\bibitem{du1986optimization}
Ding-Zhu Du.
\newblock An optimization problem on graphs.
\newblock {\em Discrete applied mathematics}, 14(1):101--104, 1986.

\bibitem{du1995complexity}
Ding-Zhu Du and Dean~F Kelley.
\newblock On complexity of subset interconnection designs.
\newblock {\em Journal of Global Optimization}, 6(2):193--205, 1995.

\bibitem{du1988matroids}
Ding-Zhu Du and Zevi Miller.
\newblock Matroids and subset interconnection design.
\newblock {\em SIAM journal on discrete mathematics}, 1(4):416--424, 1988.

\bibitem{GOLUMBIC198051}
Martin~Charles Golumbic.
\newblock Chapter 3 - perfect graphs.
\newblock In Martin~Charles Golumbic, editor, {\em Algorithmic Graph Theory and
  Perfect Graphs}, pages 51--80. Academic Press, 1980.
\newblock \href
  {https://doi.org/https://doi.org/10.1016/B978-0-12-289260-8.50010-8}
  {\path{doi:https://doi.org/10.1016/B978-0-12-289260-8.50010-8}}.

\bibitem{GOLUMBIC1980105}
Martin~Charles Golumbic.
\newblock Chapter 5 - comparability graphs.
\newblock In Martin~Charles Golumbic, editor, {\em Algorithmic Graph Theory and
  Perfect Graphs}, pages 105--148. Academic Press, 1980.
\newblock \href
  {https://doi.org/https://doi.org/10.1016/B978-0-12-289260-8.50012-1}
  {\path{doi:https://doi.org/10.1016/B978-0-12-289260-8.50012-1}}.

\bibitem{havet2022overlaying}
Fr{\'e}d{\'e}ric Havet, Dorian Mazauric, Viet-Ha Nguyen, and R{\'e}mi
  Watrigant.
\newblock Overlaying a hypergraph with a graph with bounded maximum degree.
\newblock {\em Discrete Applied Mathematics}, 319:394--406, 2022.

\bibitem{hosoda2012approximability}
Jun Hosoda, Juraj Hromkovi{\v{c}}, Taisuke Izumi, Hirotaka Ono, Monika
  Steinov{\'a}, and Koichi Wada.
\newblock On the approximability and hardness of minimum topic connected
  overlay and its special instances.
\newblock {\em Theoretical Computer Science}, 429:144--154, 2012.

\bibitem{hurtado2018colored}
Ferran Hurtado, Matias Korman, Marc van Kreveld, Maarten L{\"o}ffler, Vera
  Sacrist{\'a}n, Akiyoshi Shioura, Rodrigo~I Silveira, Bettina Speckmann, and
  Takeshi Tokuyama.
\newblock Colored spanning graphs for set visualization.
\newblock {\em Computational Geometry}, 68:262--276, 2018.

\bibitem{johnson1987hypergraph}
David~S Johnson and Henry~O Pollak.
\newblock Hypergraph planarity and the complexity of drawing {V}enn diagrams.
\newblock {\em Journal of graph theory}, 11(3):309--325, 1987.

\bibitem{korach2003clustering}
Ephraim Korach and Michal Stern.
\newblock The clustering matroid and the optimal clustering tree.
\newblock {\em Mathematical Programming}, 98:385--414, 2003.

\bibitem{krohn2014guarding}
Erik Krohn, Matt Gibson, Gaurav Kanade, and Kasturi Varadarajan.
\newblock Guarding terrains via local search.
\newblock {\em Journal of Computational Geometry}, 5(1):168--178, 2014.

\bibitem{mustafa2010improved}
Nabil~H Mustafa and Saurabh Ray.
\newblock Improved results on geometric hitting set problems.
\newblock {\em Discrete \& Computational Geometry}, 44(4):883--895, 2010.

\bibitem{onus2011minimum}
Melih Onus and Andr{\'e}a~W Richa.
\newblock Minimum maximum-degree publish--subscribe overlay network design.
\newblock {\em IEEE/ACM Transactions on Networking}, 19(5):1331--1343, 2011.

\bibitem{PyrgaR08}
Evangelia Pyrga and Saurabh Ray.
\newblock Approximating geometric coverage problems{$\epsilon$}-nets.
\newblock In {\em Proceedings of the Twenty-fourth Annual Symposium on
  Computational Geometry}, SoCG '08, pages 199--207, New York, NY, USA, 2008.
  ACM.

\bibitem{RR18}
Rajiv Raman and Saurabh Ray.
\newblock Constructing planar support for non-piercing regions.
\newblock {\em Discrete \& Computational Geometry}, 64(3):1098--1122, 2020.

\bibitem{DBLP:journals/dcg/RamanR22}
Rajiv Raman and Saurabh Ray.
\newblock On the geometric set multicover problem.
\newblock {\em Discret. Comput. Geom.}, 68(2):566--591, 2022.
\newblock \href {https://doi.org/10.1007/s00454-022-00402-y}
  {\path{doi:10.1007/s00454-022-00402-y}}.

\bibitem{raman2024hypergraph}
Rajiv Raman and Karamjeet Singh.
\newblock On hypergraph supports, 2024.
\newblock \href {http://arxiv.org/abs/2303.16515} {\path{arXiv:2303.16515}}.

\bibitem{voloshina1984planarity}
AA~Voloshina and VZ~Feinberg.
\newblock Planarity of hypergraphs.
\newblock In {\em Doklady Akademii Nauk Belarusi}, volume~28, pages 309--311.
  Akademii Nauk Belarusi F Scorina Pr 66, room 403, Minsk, Byelarus 220072,
  1984.

\bibitem{walsh1975hypermaps}
TRS Walsh.
\newblock Hypermaps versus bipartite maps.
\newblock {\em Journal of Combinatorial Theory, Series B}, 18(2):155--163,
  1975.

\bibitem{zykov1974hypergraphs}
Alexander~Aleksandrovich Zykov.
\newblock Hypergraphs.
\newblock {\em Russian Mathematical Surveys}, 29(6):89, 1974.

\end{thebibliography}
\end{document}